\def\res{\mathop{\mathrm {res}}\limits_}
\def\e{\mathrm{e}}
\def\g{\gamma}
\def\tr{\mathrm{tr}\,}
\def\Im{\mathrm{Im}\,}
\def\Re{\mathrm{Re}\,}
\def\f{\mathbf{f}}
\def\g{\mathbf{g}}
\def\be{\begin{equation}}
\def\ee{\end{equation}}
\newtheorem{theorem}{Theorem}[section]
\newtheorem{assumption}[theorem]{Assumption}
\newtheorem{lemma}[theorem]{Lemma}
\newtheorem{proposition}[theorem]{Proposition} 
\newtheorem{corollary}[theorem]{Corollary}
\newtheorem{RH}[theorem]{Riemann--Hilbert problem}
\newtheorem{remark}[theorem]{Remark}
\begin{document}

\numberwithin{equation}{section}

\title[Bessel kernel determinants and integrable equations]{Bessel kernel determinants\\ and integrable equations}

\author{Giulio Ruzza}
\address{Departamento de Matem\'atica, Faculdade de Ci\^encias da Universidade de Lisboa, Campo Grande Edif\'{i}cio C6, 1749-016, Lisboa, Portugal}
%\newline\indent Grupo de F\'{i}sica Matem\'{a}tica, 
%Departamento de Matemática, Instituto Superior T\'ecnico,
%Av.\ Rovisco Pais, 1049-001 Lisboa, Portugal}
\email{gruzza@fc.ul.pt}

\date{}

\begin{abstract}
We derive differential equations for multiplicative statistics of the Bessel determinantal point process depending on two parameters.
In particular, we prove that such statistics are solutions to an integrable nonlinear partial differential equation describing isospectral deformations of a Sturm--Liouville equation.
We also derive identities relating solutions to the integrable partial differential equation and to the Sturm--Liouville equation which imply an analogue for Painlev\'e\,V of Amir--Corwin--Quastel ``integro-differential Painlev\'e\,II equation''.
This equation reduces, in a degenerate limit, to the system of coupled Painlev\'e\,V equations derived by Charlier and Doeraene for the generating function of the Bessel process, and to the Painlev\'e\,V equation derived by Tracy and Widom for the gap probability of the Bessel process.
Finally, we study an initial value problem for the integrable partial differential equation.
The approach is based on Its--Izergin--Korepin--Slavnov theory of integrable operators and their associated Riemann--Hilbert problems.
\end{abstract}

\subjclass[2020]{
35Q15; 37K10; 60G55.}
\keywords{Determinantal point processes, integrable operators, Riemann--Hilbert problems, Fredholm determinants.}

\maketitle

\section{Introduction and results}

The \textit{Bessel process} is the determinantal random point process on $(0,+\infty)$ with correlation kernel given by the \textit{Bessel kernel}
\be
\label{eq:Besselkernel}
K_{\mathsf{Be}}(\lambda,\mu) = \frac 12\frac{\mathrm{J}_\alpha(\sqrt\lambda)\sqrt \mu\,\mathrm{J}_{\alpha}'(\sqrt \mu)-\sqrt \lambda\,\mathrm{J}_\alpha'(\sqrt \lambda)\mathrm{J}_{\alpha}(\sqrt \mu)}{\lambda-\mu},\qquad \lambda,\mu>0.
\ee
(When $\lambda=\mu$, this has to be taken in the sense of the limit.)
Here, $\mathrm{J}_\alpha(z)$ is the Bessel function of the first kind of order $\alpha$ and argument $z$, $\mathrm{J}_\alpha'(z)=\tfrac{\mathrm{d}\mathrm{J}_\alpha(z)}{\mathrm{d} z}$, and $\alpha>-1$.
The Bessel process belongs to a family of determinantal processes arising in Random Matrix Theory to describe certain \textit{universal} large-size scaling limits; other prominent members of this family are the \textit{Sine} and \textit{Airy processes}.
These processes also make their appearance in several interacting particle models beyond random matrices (notably, trapped non-interacting fermions and random partitions).

In this work, we continue the systematic study of multiplicative expectations of the determinantal point processes defined by these correlation kernels, as well as of associated ``positive-temperature'' deformations of such kernels.
In particular, this work parallels~\cite{CCR,CT}, devoted to the Airy and Sine processes.
The aforementioned positive-temperature deformations have recently appeared in models of trapped free fermions at positive temperature~\cite{DLDMS,LDMRS,LLDMS}, in Random Matrix Theory~\cite{J,LW,GS,BLEdge,BLBulk}, and, specifically for the Airy kernel, in connection with the Kardar--Parisi--Zhang equation with narrow-wedge initial data~\cite{ACQ,BG}.

More precisely, we will always assume that $\alpha$ is real and larger than $-1$, that $x$ is real and positive, and that $t$ is real, and we study the Fredholm determinant
\be
\label{eq:Q}
Q_\sigma(x,t)=1+\sum_{n=1}^{+\infty}\frac{(-1)^n}{n!}\int_{(0,+\infty)^n}\det\bigl(K_{\mathsf{Be}}(\lambda_i,\lambda_j)\bigr)_{i,j=1}^n\prod_{i=1}^n\sigma(x^{-2}\lambda_i+t)\,\mathrm{d} \lambda_i
\ee
under the conditions on the function $\sigma$ detailed in~Assumption~\ref{assumption} below.

The study is motivated by the following remarkable probabilistic content of the Fredholm determinant $Q_\sigma(x,t)$ defined in~\eqref{eq:Q}, which we will review in Section~\ref{sec:probability}.

\begin{itemize}
\item $Q_\sigma(x,t)$ is a multiplicative statistics of Bessel process.
In particular, it generalizes the gap probability and, more generally, the generating function of the Bessel process.
Our results reduce, by appropriate limits which we will detail below, to the celebrated results of Tracy and Widom~\cite{TW}, expressing the gap probability in terms of a particular solution of (a special case of) the \textit{Painlev\'e\,V equation}, and of Charlier and Doeraene~\cite{CD}, similarly expressing the generating function in terms of a particular solution to a \textit{system of coupled Painlev\'e\,V equations}.
\item $Q_\sigma(x,t)$ is a full gap probability of a \textit{thinned} Bessel process.
\item $Q_\sigma(x,t)$ is the smallest particle distribution of a \textit{positive-temperature deformation} of the Bessel process.
\end{itemize}

Asymptotic properties (as $x\to+\infty$) of these Fredholm determinant have been investigated in~\cite{BE}. (Cf.~Section~\ref{sec:PT}.)

\begin{assumption}
\label{assumption}
We have $\sigma=\sigma_0+\sigma_1$ where $\sigma_0,\sigma_1:\mathbb R\to\mathbb R$ are such that
\begin{itemize}
\item $\sigma_0$ is infinitely differentiable everywhere, and
\item $\sigma_1$ is piecewise constant with finitely many discontinuities.
\end{itemize}
Moreover, $r^a\sigma(r)\to 0$ as $r\to+\infty$, for all $a>0$ and $\sigma'_0\in L^1(0,+\infty)$.
\end{assumption}

Under these assumptions, $\sigma$ is in $L^\infty(t,+\infty)$ and is of bounded variation on $[t,+\infty)$, for any $t\in\mathbb R$.
In particular, we will denote 
\be
\label{eq:dsigma}
\int_A h(\lambda)\mathrm{d}\sigma(\lambda) = \int_A h(\lambda)\sigma'_0(\lambda)\mathrm d\lambda + \sum_{j=1}^k1_{r_j\in A}m_jh(r_j),\quad\mbox{for any Borel set $A\subseteq\mathbb R$,}
\ee
the Riemann--Stieltjes integral with respect to $\sigma$ (of a continuous function $h:A\to\mathbb C$).
Here, we introduce the notation $r_1,\dots,r_k$ for the discontinuity points of $\sigma$ and $m_j=\lim_{\epsilon\to 0_+}\bigl(\sigma(r_j+\epsilon)-\sigma(r_j-\epsilon)\bigr)$ the corresponding jumps.

\medskip

Let $v=v( x , t )$ be defined by
\be
\label{eq:defvstatement}
v=-(\log Q_\sigma)_x +\frac{ x  t }{2}-\frac{4\alpha^2-1}{8x}.
\ee
(We use subscripts to denote differentiation.)
Clearly, $v$ also depends on $\sigma$, though we will omit to indicate it explicitly.

\begin{theorem}\label{thm}
Under the hypotheses in Assumption~\ref{assumption}, the following facts hold true.
\begin{itemize}
\item[(i)] For every $t>0$, there exists $c_t>0$ such that $Q_\sigma(x,t)\not=0$ for $0<x<c_t$.
If we assume in addition that $0\leq\sigma\leq 1$, then $Q_\sigma(x,t)\not=0$ for all $x>0,\ t\in\mathbb R$.
\item[(ii)]
For all $x>0,\ t\in\mathbb R$ such that $Q_\sigma(x,t)\not=0$, the function $v=v(x,t)$ satisfies the nonlinear partial differential equation
\be
\label{eq:PDEstatement}
(2v_x- t)\,{v_t}^2+\frac 14{v_{xt}}^2-\frac 12v_{xxt}\,v_t\,=\,\frac{\alpha^2}4 .
\ee
\item[(iii)]
For all $\lambda>t$, there exists $f=f( x ;\lambda, t )$ which solves the boundary value problem
\be
\label{eq:Schrodingerstatement}
\begin{cases}
-f_{xx}\,+\,2\,v_x\,f\,=\,\lambda\, f,& x \in (0,+\infty),
\\
f\sim \sqrt x\,\mathrm{J}_\alpha(x\sqrt{\lambda-t}),& x \to 0_+ .
\end{cases}
\ee
In terms of this function, we have
\be
\label{eq:integration}
Q_\sigma(x,t) = \exp\int_0^x\log\biggl(\frac xy\biggr) \biggl(\int_ t ^{+\infty}(\lambda- t)\,f( y;\lambda, t )^2\,\mathrm{d}\sigma(\lambda)\biggr)\mathrm{d} y .
\ee
Moreover, $f$ also satisfies
\be
\label{eq:eqtstatement}
f_t\,=\,\frac 1{\lambda-t}\biggl(\frac 12 v_{xt}\,f\,-\,v_t\,f_x\biggr)
\ee
and
\be
\label{eq:nonlocal}
2v_t=x+\int_t^{+\infty}f(x;\lambda,t)^2\mathrm{d}\sigma(\lambda).
\ee
\end{itemize}
\end{theorem}

\begin{remark}
It will follow from the construction of $f(x;\lambda,t)$ that, for fixed $x,t$,
\be
\label{eq:asympfintegrals}
f(x;\lambda,t)=O\bigl(\lambda^{-\frac 14}\bigr),\ \ \lambda \to +\infty,\quad\mbox{and}\quad
f(x;\lambda,t)=O\bigl((\lambda-t)^{\frac\alpha 2}\bigr),\ \ \lambda\to t_+, 
\ee
cf.~Remark~\ref{remark:asymp}.
Therefore, all the integrals over $\lambda$ in $(t,+\infty)$ appearing in this theorem are well-defined.
\end{remark}

The proof is given in sections~\ref{sec:RH}--\ref{sec:proof}.
Next, we give some comments on Theorem~\ref{thm}.

\subsection{Remarks on Theorem~\ref{thm}}
\subsubsection{Uniqueness of \texorpdfstring{$f$}{f} for \texorpdfstring{$\alpha\geq 0$}{alpha positive}}
Given $f=f(x;\lambda,t)$ as in Theorem~\ref{thm}, we can write the general solution to $-g_{xx}+2v_x g=\lambda g$ as $g=f\bigl(A+B\int_{x_0}^xf(y)^{-2}\mathrm{d} y\bigr)$ depending on two arbitrary complex constants $A$ and $B$ (provided we fix $x_0$).
The boundary condition for~$f$ as~$x\to 0_+$ implies that $f(x)$ has no zeros for $x$ sufficiently small, such that the expression is well defined for $0<x<x_0$ with $x_0$ sufficiently small.
Then, as long as $\alpha\geq 0$, the boundary condition for~$f$ also implies that $\int_{x_0}^xf(y)^{-2}\mathrm{d} y$ diverges as $x\to 0_+$ which shows that the boundary value problem~\eqref{eq:Schrodingerstatement} has a unique solution.

\subsubsection{Integrability}
Let $v=v(x,t)$ be any solution to~\eqref{eq:PDEstatement}.
Taking one $x$-derivative we obtain that, if $v_t\not=0$, then
\be
\label{eq:differintro}
v_{xx}v_t+(2v_x-t)v_{xt}-\frac 14 v_{xxxt}=0.
\ee
(In case $v$ is given by~\eqref{eq:defvstatement}, we will prove that~\eqref{eq:differintro} holds true in any case, cf.~\eqref{eq:differ} below).
The integrability of~\eqref{eq:differintro}, and thus of the potential equation~\eqref{eq:PDEstatement}, can be highlighted by rewriting it as
\be
\label{eq:zerocurvatureintro}
\mathscr L_t(\mathscr L-t) = [\mathscr B,\mathscr L],\qquad
\mathscr L=-\partial_x^2+2v_x,\quad \mathscr B=-v_t\partial_x+\frac 12 v_{xt}.
\ee
(Note that $\mathscr L_t$ is the operator of multiplication by $2v_{xt}$.)
Then,~\eqref{eq:zerocurvatureintro} is the compatibility of the \textit{Lax pair} formed by the linear equations $\mathscr L f=\lambda f$ and $(\lambda-t)f_t=\mathscr B f$, which are exactly~\eqref{eq:Schrodingerstatement} and~\eqref{eq:eqtstatement}.

Let us also point out that letting
\be
\mathscr R_u=-\frac 14\partial_x^2+2u+u_x\partial_x^{-1}
\ee 
be the~\textit{Korteweg--de\,Vries recursion operator}~\cite{Olver}, the equation~\eqref{eq:differintro} can be recast as
\be
(\mathscr R_u-t)u_t=0,\qquad u=v_x.
\ee
This shows a similarity with the \emph{negative Korteweg--de\,Vries equation} $\mathscr R_\omega\omega_t=0$ (see~\cite{V}) for a function $\omega=\omega(x,t)$ but we were unable to relate them further.

\subsubsection{Special solutions}
Equation~\eqref{eq:PDEstatement} admits special solutions in terms of Bessel functions, explicitly given by
\be
v(x,t) = -\partial_x\log\det(M_{ij})_{i,j=1}^m+\frac {xt}2-\frac{4\alpha^2-1}{8x},\quad
M_{ij}=x^2 K_{\mathsf{Be}}\bigl(x^2(\mu_i-t),x^2(\nu_j-t)\bigr),
\ee
depending, for $m\geq 1$, on $2m$ real parameters $\mu_1,\nu_1,\dots,\mu_m,\nu_m$.
These can be derived by \textit{Darboux transformations} of the trivial solution $v(x,t)=\frac {xt}2-\frac{4\alpha^2-1}{8x}$ (corresponding to $\sigma=0$).
We will not derive these solutions here and rather refer to~\cite{CGRT} where Darboux transformations are applied to the corresponding equations for the Airy process rather than the Bessel process.
It is worth pointing out that the methodology of loc. cit. (see also~\cite{CG}) can be used to show that Darboux transformations of nontrivial solutions $v(x,t)$ (i.e., given by~\eqref{eq:defvstatement}, with nonzero $\sigma$) also enjoy a probabilistic interpretation, in terms of \textit{J\'anossy densities} of thinned Bessel processes.

Moreover, equation~\eqref{eq:PDEstatement} also admits \textit{self-similar} solutions in the form
\be
\label{eq:selfsimilar}
v(x,t)=\frac 1x k(\zeta)+\frac{xt}2-\frac{4\alpha^2-1}{8x},\qquad \zeta=x^2(r-t),
\ee
where $r$ is an arbitrary real parameter and $k=k(\zeta)$ satisfies a third-order ordinary differential equation, found by direct substitution of~\eqref{eq:selfsimilar} into~\eqref{eq:PDEstatement}:
\begin{align}
\nonumber
&\left(\alpha^2-4\zeta\right) k'(\zeta)^2-k'(\zeta) \left(\alpha^2+2\zeta \left(\zeta k'''(\zeta)+k''(\zeta)\right)-\zeta\right)
\\
\label{eq:keq}
&\quad+4 \zeta k'(\zeta)^3-\frac{1}{2} k(\zeta) \left(1-2 k'(\zeta)\right)^2+\zeta\left(\zeta k'''(\zeta)+\zeta k''(\zeta)^2+k''(\zeta)\right)=0.
\end{align}
This equation admits a reduction to the Painlev\'e\,V equation~\eqref{eq:PVTW}, cf.~Remark~\ref{rem:new}.

\subsubsection{Relation to Painlev\'e V and to a system of coupled Painlev\'e V equations}

Multiply~\eqref{eq:Schrodingerstatement} by~${v_t}^2$ and use~\eqref{eq:PDEstatement} to substitute the term $2v_x{v_t}^2$:
\be
{v_t}^2(f_{xx}+\lambda f)-\bigl( t {v_t}^2-\tfrac 14{v_{xt}}^2+\tfrac 12v_tv_{xxt}+\tfrac {\alpha^2}4\bigr)f=0.
\ee
Next, substitute $v_t=\frac 12\bigl(x+\smallint_{0}^{+\infty}f^2\mathrm{d}\sigma\bigr)$ by~\eqref{eq:nonlocal} to get an identity involving $f=f(x;\lambda,t)$ only:
\begin{align}
\nonumber
&\biggl(x+\smallint_{0}^{+\infty}f^2\mathrm{d}\sigma\biggr)^2(f_{xx}+(\lambda-t) f)
+\biggl(\frac 12+\smallint_{0}^{+\infty}ff_x\mathrm{d}\sigma\biggr)^2f
\\
&\qquad\qquad-\biggl(x+\smallint_0^{+\infty}f^2\mathrm{d}\sigma\biggr)\biggl(\smallint_0^{+\infty}\bigl(ff_{xx}+{f_x}^2\bigr)\mathrm{d}\sigma\biggr)f
=\alpha^2f.
\end{align}
Here we denote $\smallint_0^{+\infty}f^2\mathrm{d}\sigma=\smallint_0^{+\infty}f( x;\lambda,t)^2\,\mathrm{d}\sigma(\lambda)$ in the interest of lighter notation, and similarly for the other integrals in this equation.
Finally, let us introduce
\be
g(x;\lambda)=x^{-\frac 14}f(x^{\frac 12};\lambda,0)
\ee
in order to obtain the following consequence of Theorem~\ref{thm}, where we set $t=0$ without loss of generality.

\begin{corollary}
For all $x>0$ such that $Q_\sigma(\sqrt x,0)\not=0$ (in particular, if $0\leq\sigma\leq 1$, for all $x>0$), we have
\begin{align}
\nonumber
Q_\sigma(\sqrt x,0)&=1+\sum_{n=1}^{+\infty}\frac{(-1)^n}{n!}\int_{(0,+\infty)^n}\det\bigl(K_{\mathsf{Be}}(\lambda_i,\lambda_j)\bigr)_{i,j=1}^n\prod_{i=1}^n\sigma(x^{-1}\lambda_i)\mathrm{d} \lambda_i
\\
\label{eq:corollary}
&=\exp\biggl(\frac 14\int_0^x\log\biggl(\frac xy\biggr) \biggl(\int_0 ^{+\infty}\lambda\,g(y;\lambda)^2\,\mathrm{d}\sigma(\lambda)\biggr)\mathrm{d} y\biggr)
\end{align}
where $g=g(x;\lambda)$ satisfies
\begin{align}
\label{eq:idPV}
&-x g\bigl(1+\mathcal I\bigr[g^2\bigr]\bigr)\mathcal I\bigr[ (x g g')'\bigr]+x\bigl(1+\mathcal I\bigr[g^2\bigr]\bigr)^2\biggl(( x g')'+\frac \lambda 4 g\biggr)+ x^2g\bigl(\mathcal I\bigr[ gg'\bigr]\bigr)^2=\frac{\alpha^2}4 g,
\\
&g(x;\lambda)\sim\mathrm{J}_\alpha(\sqrt{x\lambda}),\quad x\to 0_+,
\end{align}
where $'=\partial_x$ and $\mathcal I [h] = \int_{0}^{+\infty}h(x,\lambda)\mathrm{d}\sigma(\lambda)$.
\end{corollary}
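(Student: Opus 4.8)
\medskip

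The plan is to read the Corollary off Theorem~\ref{thm} without any new analytic input: I would specialize to $t=0$ (which costs nothing, since $F_\sigma(x,t)=F_{\sigma(\,\cdot\,+t)}(x,0)$ and $\sigma(\,\cdot\,+t)$ still satisfies Assumption~\ref{assumption}) and then carry out the change of variables $x\mapsto\sqrt x$ together with the substitution $g(x;\lambda)=x^{-1/4}f(\sqrt x;\lambda,0)$, equivalently $f(\xi;\lambda,0)=\xi^{1/2}g(\xi^2;\lambda)$. With this, the first equality in~\eqref{eq:corollary} is immediate from the definition~\eqref{eq:F}, since putting $x\mapsto\sqrt x$ and $t=0$ turns each weight $\sigma(x^{-2}\lambda_i+t)$ into $\sigma(x^{-1}\lambda_i)$.

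For the exponential representation I would start from~\eqref{eq:integration} at $t=0$, replace $x$ by $\sqrt x$, and substitute $y=\sqrt u$ in the outer integral: then $\log(\sqrt x/y)=\tfrac12\log(x/u)$, $\d y=\tfrac12 u^{-1/2}\d u$, and $f(\sqrt u;\lambda,0)^2=u^{1/2}g(u;\lambda)^2$ by the definition of $g$, so that the factors $u^{\pm 1/2}$ cancel and, after relabelling $u\mapsto y$, one arrives at~\eqref{eq:corollary}. In the same way, the boundary behaviour $g(x;\lambda)\sim\mathrm J_\alpha(\sqrt{x\lambda})$ as $x\to0_+$ follows from the boundary condition for $f$ in~\eqref{eq:Schrodingerstatement} at $t=0$, using $(x^{1/2})^{1/2}=x^{1/4}$ and $\mathrm J_\alpha(\sqrt x\,\sqrt\lambda)=\mathrm J_\alpha(\sqrt{x\lambda})$.

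For the differential equation~\eqref{eq:idPV}, the natural starting point is the closed equation for $f=f(x;\lambda,t)$ displayed just before the Corollary, which is itself a consequence of Theorem~\ref{thm} (multiply~\eqref{eq:Schrodingerstatement} by ${v_t}^2$, use~\eqref{eq:PDEstatement} to eliminate $2v_x{v_t}^2$, then insert~\eqref{eq:nonlocal}). I would set $t=0$ there and substitute $f(\xi;\mu,0)=\xi^{1/2}g(\xi^2;\mu)$, with $\xi$ the old variable, so that after the substitution everything is a function of $x=\xi^2$, keeping in mind that inside the $\d\sigma$-integrals the spectral variable is the integration variable while the external parameter $\lambda$ in $f_{\xi\xi}+\lambda f$ is held fixed. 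A short chain-rule computation then yields, with $'=\partial_x$,
\[
\xi+\smallint_0^{+\infty}f^2\,\d\sigma=\sqrt x\,(1+\mathcal I[g^2]),\qquad
\tfrac12+\smallint_0^{+\infty}ff_\xi\,\d\sigma=\tfrac12(1+\mathcal I[g^2])+2x\,\mathcal I[gg'],
\]
and similarly $f_{\xi\xi}+\lambda f=4x^{1/4}\bigl((xg')'+\tfrac\lambda4 g\bigr)-\tfrac14 x^{-3/4}g$ and $\smallint_0^{+\infty}(ff_{\xi\xi}+f_\xi^2)\,\d\sigma=6x^{1/2}\mathcal I[gg']+4x^{3/2}\mathcal I[(gg')']$. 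Substituting these into that equation, I expect the two contributions proportional to $\tfrac14 x^{1/4}(1+\mathcal I[g^2])^2 g$ — one from $f_{\xi\xi}+\lambda f$, one from the square of the second identity above — to cancel, and, after clearing the common factor $4x^{1/4}$, the surviving integral terms to combine through the elementary identity $\mathcal I[(xgg')']=\mathcal I[gg']+x\,\mathcal I[(gg')']$ into exactly~\eqref{eq:idPV}.

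The argument involves no obstacle of principle, as Theorem~\ref{thm} does all the real work; the one place calling for care is the last verification, where one must keep the two rescalings mutually consistent, respect the distinction between the integrated and the external spectral variables, and check that every power of $x$ and the coefficient of $\alpha^2$ land correctly. That check is routine but somewhat lengthy, and it is where a slip is most likely.
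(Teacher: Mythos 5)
Your proof is correct and takes exactly the approach the paper sketches in the paragraph preceding the Corollary: multiply~\eqref{eq:Schrodingerstatement} by $v_t^2$, eliminate $2v_x v_t^2$ via~\eqref{eq:PDEstatement}, substitute $v_t$ from~\eqref{eq:nonlocal}, then perform the change of variables $f(\xi;\lambda,0)=\xi^{1/2}g(\xi^2;\lambda)$. I checked your intermediate identities (the formulas for $ff_\xi$, $f_{\xi\xi}+\lambda f$, and $ff_{\xi\xi}+f_\xi^2$) and the final cancellation of the two $\frac14 x^{1/4}(1+\mathcal I[g^2])^2 g$ terms, and everything lands correctly on~\eqref{eq:idPV} after dividing by $4x^{1/4}$ and using $\mathcal I[(xgg')']=\mathcal I[gg']+x\,\mathcal I[(gg')']$; the paper leaves this computation implicit, so your write-up is actually more explicit than the source.
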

One could call~\eqref{eq:idPV} ``integro-differential (or, nonlocal) Painlev\'e\,V equation'', inspired by the analogy with the ``integro-differential Painlev\'e\,II equation'' of Amir--Corwin--Quastel~\cite{ACQ,CCR}.
The latter is connected with the Airy process and the (cylindrical) Korteweg--de\,Vries equation, in the same way as~\eqref{eq:idPV} is connected with the Bessel process and to the nonlinear partial differential equation~\eqref{eq:PDEstatement}, as well as to the narrow-wedge solution to \textit{Kardar--Parisi--Zhang equation}, cf.~\cite{ACQ,BG}.

Since~\eqref{eq:idPV} involves derivatives in~$x$ and integrals in $\lambda$, it is not properly an integro-differential equation.
The equation might be better interpreted, following~\cite{CD}, as an infinite system of coupled Painlev\'e\,V equations.
Indeed, let $\sigma=(1-s) 1_{(-\infty,1]}$ for some $0\leq s\leq 1$, such that $\mathcal I[h]=(s-1) h(x;1)$ and thus~\eqref{eq:corollary} reduces to the Tracy--Widom formula~\cite{TW}
\be
1+\sum_{n=1}^{+\infty}\frac{(s-1)^n}{n!}\int_{(0,x)^n}\det\bigl(K_{\mathsf{Be}}(\lambda_i,\lambda_j)\bigr)_{i,j=1}^n\prod_{i=1}^n\mathrm{d} \lambda_i
= \exp\biggl(-\frac 14\int_0^x\log\biggl(\frac xy\biggr) \,q(y)^2 \,\mathrm{d} y\biggr)
\ee
and~\eqref{eq:idPV} reduces to
\begin{align}
\label{eq:PVTW}
&x q\bigl(1-q^2\bigr)(x qq')'+x\bigl(1-q^2\bigr)^2\biggl(( x q')'+\frac 1 4 q\biggr)+x^2q^3(q')^2=\frac{\alpha^2}4 q,\\
\label{eq:PVTWbc}
&q(x)\sim \sqrt{1-s}\, \mathrm{J}_\alpha(\sqrt{x}),\qquad x\to 0_+,
\end{align}
in terms of $q(x)=\sqrt{1-s}\,g(x;1)$.
This agrees with~\cite[equations~(1.16), (1.17), (1.19)]{TW}.
It is worth recalling ~from loc. cit. that this equation is equivalent to a special case of the Painlev\'e\,V equation through a rational transformation.

\begin{remark}
\label{rem:new}
Let $\mathcal T(x)=0$ be the equation~\eqref{eq:PVTW} and let $\mathcal S(\zeta)=0$ be the equation~\eqref{eq:keq} under the substitution $k(\zeta)=\frac 12\int_0^\zeta q(\eta)^2\mathrm{d}\eta$.
We claim that if $q(x)$ solves $\mathcal T(x)=0$ and satisfies the boundary condition~\eqref{eq:PVTWbc}, then $q(x)$ also satisfies $\mathcal S(x)=0$.
To prove this, let us first note that, in general,
\begin{align}
\frac{\mathcal S(x)-q(x)\mathcal T(x)}{\bigl(1-q(x)^2\bigr)^2}&=\frac 14\biggl(\frac{\bigl(\alpha^2-x\bigr)q(x)^2+xq(x)^4-4x^2q'(x)^2}{1-q(x)^2}+\int_0^xq(y)^2\mathrm d y\biggr) ,
\\
\partial_x\biggl(\frac{\mathcal S(x)-q(x)\mathcal T(x)}{\bigl(1-q(x)^2\bigr)^2}\biggr)&=\frac{2q'(x)\mathcal T(x)}{\bigl(1-q(x)^2\bigr)^2} .
\end{align}
By the second relation, if $\mathcal T(x)=0$ then $\mathcal S(x) = b\bigl(1-q(x)^2\bigr)^2$ for some constant~$b$, and we aim at showing that $b=0$.
Using the first relation, we obtain
\be
\label{eq:lastasymprel}
4b=\frac{(\alpha^2-x)q(x)^2+xq(x)^4-4x^2q'(x)^2}{1-q(x)^2}+\int_0^xq(y)^2\mathrm d y.
\ee
\begin{itemize}
\item
Assume $\alpha>0$ and (in order to obtain a contradiction) $b\not=0$.
Using~\eqref{eq:PVTWbc}, the relation~\eqref{eq:lastasymprel} implies
\be
4b\sim \alpha^2(1-s)\mathrm J_\alpha(\sqrt x)^2-4x^2q'(x)^2,\qquad x\to 0_+ .
\ee
Therefore, $q'(x)\sim\pm \sqrt{-b}\, x^{-1}$ as $x\to 0_+$ and integrating between $x/2$ and $x$ this relation in the limit $x\to 0_+$ we obtain a contradiction with~\eqref{eq:PVTWbc}.
Hence, $b=0$.
\item
Assume $-1<\alpha\leq 0$.
Using~\eqref{eq:PVTWbc}, the relation~\eqref{eq:lastasymprel} implies
\be
4b\sim -\alpha^2+\frac{4x^2q'(x)^2}{q(x)^2},\qquad x\to 0_+ .
\ee
Therefore $q'(x)\sim \pm \sqrt{b^2+\frac{\alpha^2}4}\,\sqrt{1-s}\,\mathrm J_\alpha(\sqrt x)\, x^{-1}$ as $x\to 0_+$ and integrating between $x/2$ and $x$ this relation in the limit $x\to 0_+$ we obtain a contradiction with~\eqref{eq:PVTWbc} unless $b=0$.
\end{itemize}
\end{remark}

More generally, let $k\geq 1$, $-\infty=r_0<r_1<r_2<\dots<r_k$, $0\leq s_1,\dots,s_k\leq 1$ with $s_i\not=s_{i+1}$, and let $\sigma=\sum_{j=1}^k(1-s_j) 1_{(r_{j-1},r_j]}$. (The probabilistic relevance of this case is recalled in Section~\ref{sec:GF} below.)
Then, $\mathcal I[h]=\sum_{j=1}^k(s_{j}-s_{j+1})h(x;r_j)$ (with $s_{k+1}=1$) and thus the corollary above reduces to results of Charlier and Doeraene~\cite[Theorem~1.1]{CD}, namely,~\eqref{eq:corollary} reduces to
\be
Q_\sigma(\sqrt x,0)=\prod_{j=1}^k\exp\biggl(-\frac{r_j}4\int_0^x\log\biggl(\frac xy\biggr) \,q_j(y)^2 \,\mathrm{d} y\biggr)
\ee
and~\eqref{eq:idPV} reduces to a ``system of coupled Painlev\'e\,V equations''
\begin{align}
\label{eq:system}
&x q_j\biggl(1-\sum_{l=1}^kq_l^2\biggr)\sum_{l=1}^k(x q_lq_l')'+x\biggl(1-\sum_{l=1}^kq_l^2\biggr)^2\biggl(( x q_j')'+\frac{r_j} 4 q_j\biggr) +x^2q_j\biggl(\sum_{l=1}^k q_jq_j'\biggr)^2=\frac{\alpha^2}4 q_j,
\\
&q_j(x)\sim\sqrt{s_{j+1}-s_j}\,\mathrm{J}_\alpha(\sqrt{x r_j}),\quad x\to 0_+,
\end{align}
(for all $j=1,\dots,k$) in terms of $q_j(x)=\sqrt{s_{j+1}-s_j}\,g(x;r_j)$.

Therefore, one may interpret~\eqref{eq:idPV} as a continuum limit as $k\to+\infty$ of the system~\eqref{eq:system}.

\subsection{A boundary value problem}

As our final result, we give an application to the partial differential equation~\eqref{eq:PDEstatement} of a scheme devised by Claeys and Tarricone (cf.~\cite[Remark~2.13]{CT}) to solve certain boundary value problems for partial differential equations whose solutions are expressed by Fredholm determinants.

For this purpose, we require more stringent conditions on $\sigma$, namely, smoothness and a strong decay at $+\infty$.
Let $\mathscr S_+$ be the space of infinitely differentiable functions $\sigma:\mathbb R\to\mathbb R$ such that
\be
r^a\frac{\mathrm{d}^b}{\mathrm{d} r^b}\sigma(r)\to 0
\ee
as $r\to+\infty$, for all integers $a,b\geq 0$.

\begin{theorem}
\label{thm2}
If $\sigma\in\mathscr S_+$ and $t\in\mathbb R$ is fixed,
\be
\label{eq:boundaryvalue}
v(x,t)-\frac{xt}2+\frac{4\alpha^2-1}{8x} = x^{2\alpha+1}\frac{2^{-2\alpha-1}}{\Gamma(\alpha+1)^2}\int_t^{+\infty}(\lambda-t)^\alpha\sigma(\lambda)\mathrm{d}\lambda+o(x^{2\alpha+1}),\quad x\to 0_+.
\ee
\end{theorem}

The proof is in Section~\ref{sec:proof2}.
We remind that when $t\in\mathbb R$ is fixed, $Q_\sigma(x,t)\not=0$ for $x>0$ sufficiently small (cf.~Theorem~\ref{thm}).

Remarkably, given $v_0\in\mathscr S_+$, we can construct \textit{explicitly} a solution $v(x,t)$ to~\eqref{eq:PDEstatement} such that
\be
\label{eq:boundaryvaluepb}
v(x,t)-\frac{xt}2+\frac{4\alpha^2-1}{8x} = x^{2\alpha+1}v_0(t)+o(x^{2\alpha+1}),\quad x\to 0_+.
\ee
To this end, for $f\in\mathscr S_+$, let us introduce the \textit{Riemann--Liouville integral}
\be 
\label{eq:W}
(\mathcal M_{\mu}f)(t) = \frac 1{\Gamma(\mu)}\int_t^{+\infty}(\lambda-t)^{\mu-1}f(\lambda)\mathrm{d}\lambda.
\ee
The map $\mu\mapsto(\mathcal M_\mu f)(t)$ is entire in $\mu$ for all $t\in\mathbb R$.
Indeed,~\eqref{eq:W} can be analytically continued to $\mu\in\mathbb C$ by integrating by parts, namely,
\be
\label{eq:analyticcontinuation}
(\mathcal M_{\mu}f)(t) = \frac{(-1)^k}{\Gamma(\mu+k)}\int_t^{+\infty}(\lambda-t)^{\mu+k-1}f^{(k)}(\lambda)\mathrm{d}\lambda,
\ee
which is convergent and independent of the integer $k$ as soon as $\Re\mu>-k$.
On the other hand, the map $t\mapsto(\mathcal M_\mu f)(t)$ is in~$\mathscr S_+$ for all $\mu\in\mathbb C$.
Moreover, we have $\mathcal M_\mu\circ\mathcal M_{\nu}=\mathcal M_{\mu+\nu}$, which implies the inversion formula $\mathcal M_\mu(\mathcal M_{-\mu} f)=f$.
Therefore, the function sought is $v(x,t)$ defined in~\eqref{eq:defvstatement} with
\begin{align}
\nonumber
\sigma(\lambda)&=2^{2\alpha+1}\Gamma(\alpha+1)\bigl(\mathcal M_{-\alpha-1}v_0\bigr)(\lambda)
\\
\label{eq:sigmafromv0}
&=2^{2\alpha+1}(-1)^k\frac{\Gamma(\alpha+1)}{\Gamma(k-\alpha-1)}\int_{\lambda}^{+\infty}(t-\lambda)^{k-\alpha-2}v_0^{(k)}(t)\mathrm{d} t,
\end{align}
for an integer $k$ sufficiently large (namely, $k>\alpha+1$), cf.~\eqref{eq:analyticcontinuation}.

\section{Probability and operator preliminaries}\label{sec:probability}

\subsection{The Bessel kernel operator}
In general, it turns out to be convenient to associate to a determinantal point process on a space~$\mathfrak X$ an integral kernel operator on~${L}^2(\mathfrak X)$ acting via the correlation kernel~\cite[Section~1]{S}.
In the case of the Bessel process, this is the operator $\mathcal K_{\mathsf{Be}}$, acting on~${L}^2(0,+\infty)$ by
\be
(\mathcal K_{\mathsf{Be}}\psi)(\lambda)=\int_0^{+\infty}K_{\mathsf{Be}}(\lambda,\mu)\psi(\mu)\,\mathrm{d} \mu,\qquad \psi\in{L}^2(0,+\infty),
\ee
with $K_{\mathsf{Be}}(\lambda,\mu)$ as in~\eqref{eq:Besselkernel}.
It is well-known that the Bessel kernel admits the equivalent expression
\be
\label{eq:conv}
K_{\mathsf{Be}}(\lambda,\mu) = \frac 14\int_0^1\mathrm{J}_\alpha(\sqrt{\rho\lambda})\mathrm{J}_\alpha(\sqrt{\rho\mu})\mathrm{d}\rho.
\ee

\begin{remark}
To prove~\eqref{eq:conv}, consider the Bessel differential equation in Sturm--Liouville form:
\be
\biggl(\frac \lambda 4-\frac{\alpha^2}{4\rho}\biggr)\mathrm{J}_\alpha(\sqrt{\rho\lambda}) = \partial_\rho\biggl(\rho\,\partial_\rho\,\mathrm{J}_\alpha(\sqrt{\rho\lambda})\biggr) .
\ee
Multiply it by $\mathrm{J}_\alpha(\sqrt{\rho\mu})$ and then subtract from the resulting identity the same identity with $\lambda,\mu$ swapped.
Finally,~\eqref{eq:conv} follows by an integration by parts.
\end{remark}

Such formula for the Bessel kernel allows us to write the following identity of operators
\be
\label{eq:Kisproj}
\mathcal K_{\mathsf{Be}}=\mathcal J_\alpha 1_{(0,1)}\mathcal J_\alpha
\ee
where $1_{(0,1)}$ denotes multiplication by the characteristic function of the unit interval, i.e., orthogonal projection onto ${L}^2(0,1)$, and $\mathcal J_\alpha$ is the unitary involution of ${L}^2(0,+\infty)$ defined by
\be
\label{eq:Besseltransform}
(\mathcal J_\alpha\psi)(\lambda)=\frac 12\int_0^{+\infty}\mathrm{J}_\alpha(\sqrt{\lambda\mu})\psi(\mu)\mathrm{d}\mu,\qquad\psi\in{L}^2(0,+\infty).
\ee

\begin{remark}
Strictly speaking,~\eqref{eq:Besseltransform} holds true for $\psi$ smooth with compact support, but it can be extended to $\psi$ in~${L}^2$ by an argument parallel to the usual definition of the Fourier transform on ${L}^2(\mathbb R)$. 
Note also that~\eqref{eq:Besseltransform} is, up to passing to the square root variable, the {\em Hankel transform}~\cite[Chapter~9]{Akhiezer}; indeed, the latter is usually defined as $\widehat\psi(k)=\int_0^{+\infty}\mathrm{J}_\alpha(kr)\psi(r)r\mathrm{d} r$, which implies $\widehat\psi(\sqrt k)=\frac 12\int_0^{+\infty}\mathrm{J}_\alpha(\sqrt{kr})\psi(\sqrt r)\mathrm{d} r=\mathcal J_\alpha\bigl(\psi(\sqrt\cdot)\bigr)(k)$.
\end{remark}

Introducing the multiplication operator $\mathcal M_{ x , t }^\sigma$ on ${L}^2(0,+\infty)$ (parametrically depending on $x>0$, $t\in\mathbb R$) by
\be
(\mathcal M_{ x , t }^\sigma\psi)(\lambda)=\sqrt{\sigma( x^{-2}\lambda + t )}\,\psi(\lambda),\qquad\psi\in{L}^2(0,+\infty),
\ee
we have
\be
\label{eq:Qdet}
Q_\sigma( x , t )=\det_{{L}^2(0,+\infty)}(1-\mathcal K_{ x , t }^\sigma),\qquad \mathcal K_{ x , t }^\sigma=\mathcal M_{ x , t }^\sigma\mathcal K_{\mathsf{Be}}\mathcal M_{ x , t }^\sigma.
\ee

Let us denote $\mathfrak J_p$ the ideal of operators $\mathcal A$ on~$L^2(0,+\infty)$ such that the Schatten $p$-norm
\be
\|\mathcal A\|_p=\bigl(\tr |\mathcal A|^p\bigr)^{1/p}
\ee
is finite.
We will need the cases $p=1$ (trace-class operators) and $p=2$ (Hilbert--Schmidt operators) and in particular the following characterizations of~$\mathfrak J_1$ and $\mathfrak J_2$, cf.~\cite{Simon}.
\begin{itemize}
\item $\mathcal A\in\mathfrak J_2$ if and only if
\be
\bigl(\mathcal A \psi\bigr) (u) = \int_0^{+\infty} a(u,v)\psi(v)\mathrm{d} y
\ee
for some $a\in L^ 2\bigl((0,+\infty)\times(0,+\infty)\bigr)$, in which case
\be
\|\mathcal A\|_2=\int_0^{+\infty}\int_0^{+\infty}\bigl|a(u,v)\bigr|^2\mathrm{d} u\mathrm{d} v  .
\ee 
\item $\mathcal A\in\mathfrak J_1$ if and only $\mathcal A=\mathcal B\,\mathcal C$ for some~$\mathcal B,\mathcal C\in\mathfrak J_2$, in which case $\|\mathcal A\|_1\leq \|\mathcal B\|_2\,\|\mathcal C\|_2$.
\end{itemize}

\begin{lemma}\label{lemma:traceclass1}
The following facts hold true.

\textit{(i)} $\mathcal K^\sigma_{ x , t }$ is self-adjoint.

\textit{(ii)} $\mathcal K^\sigma_{ x , t }\in\mathfrak J_1$.

\textit{(iii)}  If $0\leq\sigma\leq 1$, we have $0\leq\mathcal K^\sigma_{ x , t }<1$.

\textit{(iv)} $\|\mathcal K^\sigma_{x,t}\|_1=O(x^{2+2\alpha})$ as $x\to 0_+$.
\end{lemma}
\begin{proof}~

\textit{(i)}
Self-adjointness follows directly from~\eqref{eq:Kisproj}.

\textit{(ii)}
Write $\mathcal K^\sigma_{x,t}=\mathcal H\mathcal H^\dagger$ where $\mathcal H=\mathcal M^\sigma_{x,t}\mathcal J_\alpha 1_{(0,1)}$.
Note that
\begin{align}
\nonumber
\|\mathcal H\|_{2}^2&=\frac 14\int_0^{+\infty}\sigma( x ^{-2}\lambda+ t )\biggl(\int_0^1 \mathrm{J}_\alpha^2(\sqrt{\lambda\mu})\mathrm{d}\mu\biggr)\mathrm{d}\lambda
\\
\label{eq:finitetrace}
&=\frac 14\int_0^{+\infty}\sigma( x ^{-2}\lambda+ t )K_{\mathsf{Be}}(\lambda,\lambda)\mathrm{d}\lambda<+\infty,
\end{align}
where we use $K_{\mathsf{Be}}(\lambda,\lambda)=O(\lambda^{-\frac 12})$ as $\lambda\to+\infty$ and~Assumption~\ref{assumption}.
Hence $\mathcal H,\mathcal H^\dagger\in\mathfrak J_2$ which implies $\mathcal K^\sigma_{ x , t }\in\mathfrak J_1$.

\textit{(iii)}
The weak inequalities $0\leq\mathcal K^\sigma_{ x , t }\leq 1$ follow directly from the assumption $0\leq \sigma\leq 1$ because $\mathcal K_{\mathsf{Be}}$ is a projection operator, cf.~\eqref{eq:Kisproj}. 
Therefore, we only have to show that $1$ is not an eigenvalue of the compact operator $\mathcal K^\sigma_{ x , t }$, which implies the strict inequality $\mathcal K^\sigma_{x,t}<1$.
To this end, assume that $\psi\in{L}^2(0,+\infty)$ satisfies $\psi=\mathcal K^\sigma_{ x , t }\psi$; our goal is to show that $\psi=0$.
Set $\widetilde\psi=\mathcal K_{\mathsf{Be}} \mathcal M_{ x , t }^\sigma\psi$, such that $\psi=\mathcal K^\sigma_{ x , t }\psi$ implies $\sqrt{\sigma( x^{-2}\lambda+ t )}\widetilde\psi(\lambda)=\psi(\lambda)$.
Moreover, since $\mathcal K_{\mathsf{Be}}$ is a projection operator and $0\leq\sigma\leq 1$, the following chain of inequalities
\begin{align}
\nonumber
\|\widetilde\psi(\cdot)\|_{L^2(0,+\infty)}&\leq\|\sqrt{\sigma(x^{-2}\cdot + t )}\psi(\cdot)\|_{L^2(0,+\infty)}
\\
&=\|\sigma( x ^{-2}\cdot+ t )\widetilde\psi(\cdot)\|_{L^2(0,+\infty)}\leq\|\widetilde\psi(\cdot)\|_{L^2(0,+\infty)}
\end{align}
collapses to a chain of equalities, implying that $(1-\sigma( x ^{-2}\lambda+ t ))\widetilde\psi(\lambda)=0$ almost everywhere in~$\lambda\in (0,+\infty)$.
By Assumption~\ref{assumption}, this implies that $\widetilde\psi$ vanishes on some interval of the form $(s,+\infty)$.
On the other hand, $\widetilde\psi$ belongs to the range of $\mathcal K_{\mathsf{Be}}$ which, by~\eqref{eq:Kisproj} and properties of the Bessel functions, consists of functions with an analytic extension to an open neighborhood of the positive real half-line.
Therefore $\widetilde\psi$ must be identically zero, and so does $\psi$.

\textit{(iv)}
By standard properties of Bessel functions, cf.~\cite[\S~10]{DLMF}, $K_{\mathsf{Be}}(\lambda,\lambda)$ is a smooth function of $\lambda>0$, bounded in every compact set of $(0,+\infty)$ and such that $K_\mathsf{Be}(\lambda,\lambda)=O(\lambda^{\alpha})$ as $\lambda\to 0_+$ and $K_{\mathsf{Be}}(\lambda,\lambda)=O(\lambda^{-\frac 12})$ as $\lambda\to+\infty$.
It follows that $K_\mathsf{Be}(\lambda,\lambda)=O(\lambda^\alpha)$ for $0<\lambda\leq 1$ and $K_{\mathsf{Be}}(\lambda,\lambda)=O(\lambda^{-\frac 12})$ for $\lambda\geq 1$.
On the other hand, by Assumption~\ref{assumption} we have, for any fixed $t$, $\sigma(r)=O\bigl(\min(1,(r-t)^{-b})\bigr)$ for all $r>t$ and for an arbitrarily large $b>0$.
Therefore, by~\eqref{eq:finitetrace} we have, assuming $x<1$ and $b$ large enough,
\begin{equation}
\label{eq:last2}
\|\mathcal H\|_{2}^2=O\biggl(\int_0^{x^2}\lambda^\alpha\mathrm d\lambda + \int_{x^2}^1\lambda^\alpha(x^{-2}\lambda)^{-b}\mathrm d\lambda
+\int_1^{+\infty}\lambda^{-\frac 12}(x^{-2}\lambda)^{-b}\mathrm d\lambda\biggr)=O(x^{2+2\alpha})
\end{equation}
and the proof is complete.
\end{proof}

\begin{corollary}
\label{corollary:traceclass}
The right-hand side of~\eqref{eq:Qdet} is a well-defined Fredholm determinant and $\log Q_\sigma(x,t)=O(x^{2+2\alpha})$ as $x\to 0_+$.
Moreover, if $0\leq \sigma\leq 1$ we have $0<Q_\sigma(x,t)\leq 1$ for all $x\in\mathbb R$, $t>0$.
\end{corollary}

\begin{proof}
Recall that $\bigl|\det(1-\mathcal L)-1\bigr|\leq \|\mathcal L\|_1\,\exp\bigl(1+\|\mathcal L\|_1\bigr)$ for any $\mathcal L\in\mathfrak J_1$ (cf.~\cite{Simon}).
\end{proof}

To show that $Q_\sigma(x,t)$ is smooth in $x,t$ we perform a change of variables and write
\be
Q_\sigma(x,t)=\det_{L^2(t,+\infty)}(1-\widetilde{\mathcal K}_{x,t}^\sigma)
\ee
where $\widetilde{\mathcal K}_{x,t}^\sigma$ acts on $L^2(t,+\infty)$ through the kernel 
\be
x^2 \sqrt{\sigma(u)\sigma(v)}K_{\mathsf{Be}}\bigl(x^2(u-t),x^2(v-t)\bigr),\qquad u,v>t.
\ee

\begin{lemma}\label{lemma:traceclass2}
The map $x\mapsto Q_\sigma(x,t)$ is smooth for all $x>0$ and, whenever $Q_\sigma(x,t)\not=0$, we have
\be
\label{eq:Jacobifinal}
\partial_x\log Q_\sigma(x,t)=-\tr_{L^2(t,+\infty)}\biggl((1-\widetilde{\mathcal K}_{x,t}^\sigma)^{-1}\,\partial_x\widetilde{\mathcal K}_{x,t}^\sigma\biggr).
\ee 
\end{lemma}

\begin{proof}
Similarly as before, we may write $\widetilde{\mathcal K}^\sigma_{x,t}=\widetilde{\mathcal H}\widetilde{\mathcal H}^\dagger$ where $\widetilde{\mathcal H}:L^2(0,1)\to L^2(t,+\infty)$ acts via the kernel $H(u,\rho)=x\sqrt{\sigma(u)}\mathrm J_\alpha\bigl(x\sqrt{\rho(u-t)}\bigr)$.
The latter, as a function of $x$, is infinitely differentiable, with $x$-derivatives which are still in $L^2\bigl((t,+\infty)\times(0,1),\mathrm d u\,\mathrm d \rho\bigr)$ and so we can conclude (thanks to the fast decay of $\sigma$ at $+\infty$) that $\widetilde{\mathcal H}$ is infinitely differentiable in $x$ with respect to the Hilbert--Schmidt norm.
Hence, $\widetilde{\mathcal K}_{x,t}^\sigma$ is infinitely differentiable in $x$ with respect to the trace-class norm, and so $Q_\sigma(x,t)$ is smooth in~$x$.
Finally,~\eqref{eq:Jacobifinal} is an application of the well-known Jacobi's formula, cf.~\cite{Widom}.
\end{proof}

Finally, let us prove a lemma which will be useful later on.

\begin{lemma}
\label{lemma:Bessel}
Let $j_\alpha$ be the smallest positive zero of the Bessel function $\mathrm{J}_\alpha$.
We have
\be
\frac{K_{\mathsf{Be}}\bigl(\xi^2,\mu\bigr)}{\mathrm{J}_\alpha(\xi)}=\begin{cases}
O\bigl(\mu^{\frac \alpha 2}\bigr),&0<\mu\leq 1, \\
O\bigl(\mu^{-\frac 14}\bigr),&\mu\geq 1,
\end{cases}
\ee
uniformly for $0<\xi\leq\xi_0$ for all $\xi_0<\min(j_\alpha,j_{\alpha+1})$.
\end{lemma}
\begin{proof}
Let us recall that $\mathrm{J}_\alpha(z)$ is positive when $0<z<j_\alpha$.
Moreover, $\partial_z(\mathrm{J}_\alpha(z)z^{-\alpha})=-z^{-\alpha}\mathrm{J}_{\alpha+1}(z)$ such that the map $z\mapsto \mathrm{J}_\alpha(z)z^{-\alpha}$ is positive and monotonically decreasing for $0<z<\min(j_\alpha,j_{\alpha+1})$.
Hence, the map $\rho\mapsto\mathrm{J}_\alpha(\sqrt{\rho}\xi)\rho^{-\frac \alpha 2}$ is positive and monotonically decreasing for $0<\rho<1$ for all $0<\xi<\min(j_\alpha,j_{\alpha+1})$, implying that
\be
\mathrm{J}_\alpha(\sqrt{\rho}\xi)\rho^{-\frac \alpha 2}\leq\lim_{\rho\to 0_+}\mathrm{J}_\alpha(\sqrt{\rho}\xi)\rho^{-\frac \alpha 2}=\frac{\xi^{\alpha}}{2^{\alpha}\Gamma(1+\alpha)}
\ee
for all $0<\rho<1$, $0<\xi<\min(j_\alpha,j_{\alpha+1})$.
Therefore, if $0<\xi<\xi_0<\min(j_\alpha,j_{\alpha+1})$ and if $0<\rho<1$,
\be
\label{eq:BesselIneq}
\frac{\mathrm{J}_\alpha(\sqrt{\rho}\xi)}{\mathrm{J}_\alpha(\xi)}\rho^{-\frac \alpha 2}\leq\frac{\xi_0^{\alpha}}{2^{\alpha}\Gamma(1+\alpha)\mathrm{J}_\alpha(\xi_0)}.
\ee
Hence, by~\eqref{eq:conv} we can bound
\be
\frac{K_{\mathsf{Be}}\bigl(\xi^2,\mu\bigr)}{\mathrm{J}_\alpha(\xi)}=\frac 14\int_0^1\mathrm{J}_\alpha(\sqrt{\rho\mu})\frac{\mathrm{J}_\alpha(\xi\sqrt{\rho})}{\mathrm{J}_\alpha(\xi)}\mathrm d\rho=O\biggl( \int_0^1|\mathrm{J}_\alpha(\sqrt{\rho\mu})|\rho^{\frac \alpha 2}\mathrm d\rho\biggr)
\ee
uniformly for $0<\xi<\xi_0$ for all $\xi_0<\min(j_\alpha,j_{\alpha+1})$.
Next, since $\mathrm{J}_\alpha(z)=O(z^\alpha)$ for $0<z\leq 1$ and $\mathrm{J}_\alpha(z)=O(z^{-\frac 12})$ for $z\geq 1$, the following estimates complete the proof.
\begin{itemize}
\item When $0<\mu\leq 1$, $\int_0^1|\mathrm{J}_\alpha(\sqrt{\rho\mu})|\rho^{\frac \alpha 2}\mathrm d\rho=O\bigl(\mu^{\frac\alpha 2}\int_0^1\rho^{\alpha}\mathrm d\rho\bigr)=O(\mu^{\frac \alpha 2})$.
\item When $\mu\geq 1$, $\int_0^1|\mathrm{J}_\alpha(\sqrt{\rho\mu})|\rho^{\frac \alpha 2}\mathrm d\rho=O\bigl(\mu^{\frac\alpha 2}\int_0^{\frac 1\mu}\rho^\alpha\mathrm d\rho+\mu^{-\frac 14}\int_{\frac 1\mu}^1\rho^{-\frac 14+\frac \alpha 2}\mathrm d\rho\bigr)=O(\mu^{-\frac 14})$.
\end{itemize}
\end{proof}

\subsection{Multiplicative statistics of the Bessel process}\label{sec:GF}

It follows from basic properties of determinantal point process~\cite[Theorem~4]{S} that the Bessel process has (almost surely) infinitely many particles and a smallest particle.
Denoting $\lambda_1<\lambda_2<\cdots$ these particles, we have, again by standard properties of determinantal point processes~\cite[eq.~2.4]{B},
\be
\label{eq:multiplicative}
Q_\sigma( x , t )=\mathbb E\prod_{i\geq 1}\bigl(1-\sigma(x^{-2} \lambda_i + t )\bigr).
\ee
Therefore, the quantity of interest in this paper is naturally a \textit{multiplicative statistics} of the Bessel process.

In particular, when $\sigma=\sum_{j=1}^k(1-s_j) 1_{[r_{j-1},r_j)}$, for some $k\geq 1$, $-\infty=r_0<r_1<r_2<\dots<r_k$, and $0\leq s_1,\dots,s_k\leq 1$ with $s_i\not=s_{i+1}$, $Q_\sigma( x , t )$ is called \textit{generating function} of the process~\cite{CD}, because its Taylor coefficients at $s_j=0$ encode natural counting statistics of the Bessel process.
Namely, let $\#_j$ be the random variable counting the number of particles $\lambda_i$ satisfying $x^2(r_{j-1}-t)<\lambda_i<x^2(r_j-t)$, then
\be
Q_\sigma(x,t)=\mathbb E\prod_{j=1}^ks_j^{\#_j} = 
\sum_{m_1,\dots,m_k\geq 0}\mathbb P\bigl(\#_1=m_1,\dots,\#_k=m_k\bigr)\, s_1^{m_1}\cdots s_k^{m_k}.
\ee
Moreover, when $\sigma=1_{(-\infty,r]}$, $Q_\sigma(x,t)$ gives the smallest particle distribution:
\be
Q_\sigma(x,t)=\mathbb P\bigl(\lambda_1>x^2(r-t )\bigr).
\ee

\subsection{Full gap probability of a thinned Bessel process}

In this paragraph, we assume $0\leq \sigma\leq 1$.
Let the \textit{thinned} Bessel process be the random point process obtained from the Bessel process by removing each particle independently with probability $1-\sigma( x^{-2}\lambda + t )$ depending on the location $\lambda\in\mathbb R$ of the particle (as  well as on the usual parameters $x>0$ and $t\in\mathbb R$).
The thinned Bessel process is still determinantal (a fact which holds true in general for thinning of determinantal point processes~\cite{LMR}) with correlation kernel
\be
\sqrt{\sigma(x^{-2}\lambda+t)}\,K_{\mathsf{Be}}(\lambda,\mu)\,\sqrt{\sigma(x^{-2}\mu+t)}.
\ee
Therefore, by using again the general identity for multiplicative statistics of determinantal point processes~\cite[eq.~2.4]{B}, we infer that $Q_\sigma( x , t )$ is the probability that the thinned Bessel process contains no particle at all (a \textit{full gap probability}).

\subsection{Positive-temperature deformation of the Bessel process}\label{sec:PT}

Recalling the factorization $\mathcal K^\sigma_{ x , t }=\mathcal H\mathcal H^\dagger$ introduced in the proof of Lemma~\ref{lemma:traceclass1}, we can write
\be
Q_\sigma( x , t )=\det_{L^2(0,+\infty)}(1-\mathcal H\mathcal H^\dagger)=\det_{L^2(0,+\infty)}(1-\mathcal H^\dagger\mathcal H) = \det_{L^2(0,+\infty)} (1-1_{(0,1)}{\mathcal A}_{ x , t }^\sigma 1_{(0,1)})
\ee
where ${\mathcal A}_{ x , t }^\sigma=\mathcal J_\alpha(\mathcal M_{ x , t }^\sigma)^2\mathcal J_\alpha$, which has integral kernel
\be
A_{ x , t }^\sigma(\lambda,\mu) = \frac 14\int_0^{+\infty}\sigma( x ^{-2}\rho + t )\mathrm{J}_\alpha(\sqrt{\rho\lambda})\mathrm{J}_\alpha(\sqrt{\rho\mu})\mathrm d\rho.
\ee
Setting $ t=0$ and a performing a change of variable, we obtain
\be
Q_\sigma( x ,0)=\det_{L^2(0,+\infty)}(1-1_{(0, x ^{2})}\widetilde{\mathcal K}_\sigma 1_{(0, x ^{2})})
\ee
where $\widetilde{\mathcal K}_\sigma$ has integral kernel
\be
\label{eq:positivetempBessel}
\widetilde{K}_\sigma(\lambda,\mu) =\frac 14\int_0^{+\infty}\sigma(\rho)\mathrm{J}_\alpha(\sqrt{\rho\lambda})\mathrm{J}_\alpha(\sqrt{\rho\mu})\mathrm d\rho,
\ee
which can be regarded as a deformation of the Bessel kernel: when $\sigma=1_{(-\infty,1]}$, we recover $K_{\mathsf{Be}}(\lambda,\mu)=\widetilde K_{1_{(0,1)}}(\lambda,\mu)$, cf.~\eqref{eq:conv}.

By the same general facts about determinantal point processes used above, it is not hard to prove that the kernel $\widetilde K_\sigma$ defines a determinantal point process with almost surely a smallest particle $\lambda_{\mathrm{min}}$ and so $Q_\sigma( x ,0)=\mathbb P(\lambda_{\mathrm{min}}> x ^{2})$ gives the smallest particle distribution.

Kernels of the form~\eqref{eq:positivetempBessel} have recently attracted some interest, cf.~\cite{BO,LLDMS}.

\section{Riemann--Hilbert problems}\label{sec:RH}

\subsection{Integrable operators}

The operator $\mathcal K^\sigma_{ x , t }$, see~\eqref{eq:Qdet}, is \textit{2-integrable}, in the sense of Its--Izergin--Korepin--Slavnov~\cite{IIKS}.
Namely, it acts on ${L}^2(0,+\infty)$ through a kernel of the form
\be
\label{eq:integrable}
\frac{\f(\lambda)^\top \, \g(\mu)}{\lambda-\mu}=\frac{f_1(\lambda)g_1(\mu)+f_2(\lambda)g_2(\mu)}{\lambda-\mu},\qquad
\mathbf f=\begin{pmatrix}
f_1 \\ f_2
\end{pmatrix},\quad
\mathbf g=\begin{pmatrix}
g_1 \\ g_2
\end{pmatrix}.
\ee
In the case of present interest, the vectors $\f(\lambda)$ and $\g(\lambda)$ are given by
\be
\label{eq:fg}
\f(\lambda)=\sqrt{\frac{\sigma(x^{-2}\lambda+t)}{2}}\widehat \f(\lambda),\qquad
\g(\lambda)=\sqrt{\frac{\sigma(x^{-2}\lambda+t)}{2}}\widehat \g(\lambda),
\ee
with
\be
\label{eq:fghat}
\widehat \f(\lambda)=C\begin{pmatrix}\mathrm{J}_\alpha(\sqrt \lambda) \\ \sqrt \lambda\,\mathrm{J}'_\alpha(\sqrt \lambda)\end{pmatrix},\qquad
\widehat \g(\lambda)=C^{-\top}\begin{pmatrix} \sqrt \lambda\,\mathrm{J}_\alpha'(\sqrt \lambda) \\ -\mathrm{J}_\alpha(\sqrt \lambda)\end{pmatrix}.
\ee
Here, $C$ could be any $2\times 2$ invertible matrix independent of~$\lambda$, which we immediately fix for later convenience (cf. Remark~\ref{remark:reasonforC}) as
\be
\label{eq:C}
C=\begin{pmatrix}
1 & 0 \\ d_\alpha & 1
\end{pmatrix},\qquad d_\alpha=\frac{4\alpha^2+3}{8}.
\ee

\begin{remark}~
\begin{itemize}
\item The kernel~\eqref{eq:integrable} is continuous along the diagonal~$\lambda=\mu$.
\item At points where $\sigma<0$, the square root can be chosen arbitrarily.
\end{itemize}
\end{remark}

We now summarize the main points of the well-known Its--Izergin--Korepin--Slavnov theory of integrable operators, adapted to the specific case at hand.
For further information we refer to~\cite{DIZ,HI,IIKS} or~\cite[Chapter~5]{BDS}.

\begin{remark}
When $\alpha\geq 0$, we have $\f,\g\in L^\infty(0,+\infty)$ which is the usual setting for Its--Izergin--Korepin--Slavnov theory of integrable operators, cf.~\cite[Definition~5.18]{BDS}.
Some minor clarifications are needed when $-1<\alpha<0$, in which case $\f,\g\not\in L^\infty(0,+\infty)$, therefore we include the following discussion and Proposition~\ref{prop:new}.
\end{remark}

\medskip

We first recall a few facts about Cauchy operators.

Let us denote, for $h\in L^p(0,+\infty)$ for some $1<p<\infty$,
\be
\mathcal C h (z) = \frac 1{2\pi\mathrm i}\int_0^{+\infty}\frac{h(\lambda)}{\lambda-z}\mathrm d \lambda
\ee
and the boundary values
\be
\mathcal C_\pm h (\lambda) = \lim_{\epsilon\to 0_+} \mathcal C h (\lambda\pm\mathrm i\epsilon)
\ee
which are known to exist for almost all $\lambda>0$ (and they exist more generally as non-tangential limits as the argument of $\mathcal C h$ approaches the positive real line from above or below).
Moreover $\mathcal C_\pm$ are bounded operators on $L^p(0,+\infty)$ for any $1<p<\infty$ and $\mathcal C_\pm h(\lambda)$ is continuous at all points $\lambda$ where $h(\lambda)$ is smooth.
We will also need the following two statements whose proof can be extracted from~\cite[Chapter~4]{Musk}.
\begin{itemize}
\item For any jump discontinuity $r>0$ of $h$ we have
\be
\label{cauchylog}
\mathcal C h(z)=O(\log|z-r|)
\ee
as $z\to r$ away from the real axis.
\item If $h(\lambda)$ is smooth in a right neighborhood of $0$ and $h(\lambda)=O(\lambda^\beta)$ as $\lambda\to 0_+$ for some $\beta>-1$, we have 
\be 
\label{cauchyendpoint}
\mathcal C h(z)=
\begin{cases}
O(\log|z|) &\mbox{if }\beta\geq 0,\\
O(|z|^\beta) &\mbox{if }-1<\beta<0,
\end{cases}
\ee
as $z\to 0$ away from the positive real axis.
\end{itemize}
\medskip

Let us introduce, whenever $Q_\sigma(x,t)\not=0$,
\be
\label{eq:defF}
\mathbf F=\begin{pmatrix}F_1 \\ F_2 
\end{pmatrix}=\begin{pmatrix}(1-\mathcal K_{x,t}^\sigma)^{-1}f_1 \\ 
(1-\mathcal K_{x,t}^\sigma)^{-1}f_2 
\end{pmatrix}.
\ee

Recall from the introduction that we denote $r_1,\dots,r_k$ the discontinuity points of $\sigma(\cdot)$.
Let $\widehat r_i=x^2(r_i-t)$ be the discontinuity points of $\sigma(x^{-2}\cdot +t)$.

\begin{lemma}
\label{lemma:F}
For $j=1,2$, $F_j=F_j(\lambda)$ is smooth for $\lambda\not=\widehat r_i$, has jump discontinuities at $\lambda=\widehat r_i$, and is $O\bigl(\lambda^{\frac \alpha 2}\bigr)$ as $\lambda\to 0_+$.
\end{lemma}
\begin{proof}
The defining relation~\eqref{eq:defF} implies, for $j=1,2$,
\begin{align}
\nonumber
F_j(\lambda)&=\int_0^{+\infty} K^\sigma_{x,t}(\lambda,\mu)F_j(\mu)\mathrm d\mu + f_j(\lambda)
\\
&=\sqrt{\sigma(x^{-2}\lambda+t)}\int_0^{+\infty} K_{\mathsf{Be}}(\lambda,\mu)\sqrt{\sigma(x^{-2}\mu+t)}F_j(\mu)\mathrm d\mu + f_j(\lambda)
\end{align}
and the regularity properties in $\lambda$ are clear.
Moreover, from this relation we also get the estimate
\be
|F_i(\lambda)|\leq |f_i(\lambda)|+\|\sqrt\sigma\|_{L^\infty(t,+\infty)}\,\|F_i\|_{L^2(0,+\infty)}\,\|K_{\mathsf{Be}}(\lambda,\cdot)\sqrt{\sigma(x^2\cdot+t)}\|_{L^2(0,+\infty)}.
\ee
Standard asymptotic properties of Bessel functions imply that $f_i(\lambda)=O(\lambda^{\frac \alpha 2})$ as $\lambda\to 0_+$.
Moreover, by Lemma~\ref{lemma:Bessel} we can bound the last $L^2$-norm, for all $\lambda\in (0,\lambda_0)$ for sufficiently small $\lambda_0$, as follows:
\begin{align}
\nonumber
&\biggl(
\int_0^{+\infty}K_{\mathsf{Be}}(\lambda,\mu)^2\,|\sigma(x^2\mu+t)|\,\mathrm d\mu\biggr)^{\frac 12}
\\
&\qquad\leq M J_\alpha(\sqrt\lambda)\biggl(\|\sigma\|_{L^\infty(t,+\infty)}\int_0^1\mu^\alpha\mathrm d\mu\,+\,\int_{1}^{+\infty}\mu^{-\frac 12}|\sigma(x^2\mu+t)|\,\mathrm d\mu\biggr)^{\frac 12}
\end{align}
for some constant $M$ depending on $\alpha,\lambda_0$ only.
Therefore, also thanks to the fast decay of $\sigma$ at $+\infty$, we have $\|K_{\mathsf{Be}}(\lambda,\cdot)\sqrt{\sigma(x^2\cdot+t)}\|_{L^2(0,+\infty)}=O\bigl(\mathrm{J}_\alpha(\sqrt\lambda)\bigr)=O(\lambda^{\frac\alpha 2})$ as $\lambda\to 0_+$, which completes the proof.
\end{proof}

Next, we introduce, whenever $Q_\sigma(x,t)\not=0$,
\be
\label{eq:Yexplicit}
Y(z)=I-2\pi\mathrm i\, \mathcal C\bigl(\mathbf F\mathbf g^\top\bigr)(z)=I-\int_0^{+\infty}\frac{\mathbf F(\lambda)\,\mathbf g(\lambda)^\top}{\lambda-z}\mathrm d\lambda,\qquad z\in\mathbb C\setminus[0,+\infty).
\ee

\begin{remark}
Hereafter, $I$ denotes the $2\times 2$ identity matrix.
Moreover, we do not indicate explicitly the dependence on $\sigma,x,t$ of various quantities (e.g., $\mathbf f,\mathbf g,Y$) in the interest of lighter notations.
\end{remark}

We shall prove (cf.~Proposition~\ref{prop:new}) that this matrix-valued function of $z$ is uniquely characterized by the following \textit{Riemann--Hilbert problem}.

\begin{RH}\label{RHP:Y}~
\begin{itemize}
\item[(a)] $Y(z)$ is a $2\times 2$ matrix analytically depending on $z\in\mathbb C\setminus[0,+\infty)$. 
\item[(b)] For all $\lambda\in(0,+\infty)\setminus\lbrace \widehat r_1,\dots,\widehat r_k\rbrace$, the boundary values $Y_\pm(\lambda)=\lim_{\epsilon\to 0_+}Y(\lambda\pm\mathrm{i}\epsilon)$ exist, are continuous, and satisfy
\be
\label{eq:Yjump}
Y_+(\lambda) = Y_-(\lambda) J_Y(\lambda),\quad J_Y(\lambda)=I-2\pi\mathrm{i}\,\f(\lambda) \g(\lambda)^\top.
\ee
\item[(c)] As $z\to \widehat r_j$ away from the real axis we have $Y(z)=O\bigl(\log|z-\widehat r_j|\bigr)$, and as $z\to 0$ away from the positive real axis we have
\be
\label{eq:Y0}
Y(z)=\begin{cases}
O\bigl(\log|z|\bigr) &\text{if }\alpha\geq 0,\\
\begin{pmatrix}
O\bigl(\log|z|\bigr) & O\bigl(|z|^\alpha\bigr)
\\
O\bigl(\log|z|\bigr) & O\bigl(|z|^\alpha\bigr)
\end{pmatrix}\begin{pmatrix}
1 & 0 \\ -d_\alpha-\alpha & 1
\end{pmatrix}&\text{if } -1<\alpha<0,
\end{cases}
\ee
where $d_\alpha$ is defined in~\eqref{eq:C}.
\item[(d)] There exists a $2\times 2$ matrix $Y_1$, independent of $z$, such that
\be
\label{eq:Yasympinfty}
Y(z)\sim I+Y_1z^{-1}+O(z^{-2})
\ee
as $z\to\infty$, uniformly within $\mathbb C\setminus[0,+\infty)$.
\end{itemize}
\end{RH}

Let us also introduce the resolvent operator $\mathcal R_{ x , t }^\sigma$, for all $x,t$ such that $Q_\sigma(x,t)\not=0$, by
\be\label{eq:resolventoperator}
\mathcal R_{ x , t }^\sigma=(1-\mathcal K^\sigma_{ x , t })^{-1}-1.
\ee

\begin{proposition}\label{prop:new}
The following facts hold true for all $x,t$ such that $Q_\sigma(x,t)\not=0$.

\textit{(i)}
$Y(z)$ defined in~\eqref{eq:Yexplicit} is the unique solution to Riemann--Hilbert problem~\ref{RHP:Y}.

\textit{(ii)}
The resolvent operator~\eqref{eq:resolventoperator} is also 2-integrable, acting on ${L}^2(0,+\infty)$ through the kernel
\be
\label{eq:resolvent}
R_{x,t}^\sigma(\lambda,\mu) = \frac{\f(\lambda)^\top Y(\lambda)^\top Y(\mu)^{-\top}\g(\mu)}{\lambda-\mu}.
\ee

\textit{(iii)}
In terms of the matrix $Y_1$ appearing in~\eqref{eq:Yasympinfty}, we have
\be
\label{eq:Ydet}
x\,\partial_x\log Q_\sigma(x,t) = (Y_1)_{1,2}.
\ee
\end{proposition}

\begin{remark}\label{rem:bdryvalues}
Equation~\eqref{eq:resolvent} involves the limiting values of $Y(z)$ as $z$ approaches the real axis.
Even though these limiting values depend on whether we approach the real axis from above or below, it follows from~\textit{(b)} in this Riemann--Hilbert problem that $Y_+\f = Y_-\f$ and ${Y_+}^{-\top}\mathbf g = {Y_-}^{-\top}\mathbf g$, such that this choice does not affect~\eqref{eq:resolvent}.
\end{remark}

\begin{proof}[Proof of Proposition~\ref{prop:new}]~

\textit{(i)} Let us first prove uniqueness.
To start, note that any solution $Y(z)$ has unit determinant: this follows by a standard argument based on the fact that $\det Y(z)$ extends analytically across $(0,+\infty)$ with possible isolated singularities at~$z=0,\widehat r_1,\dots,\widehat r_k$ by~\textit{(b)}, such singularities are removable by~\textit{(c)}, and $\det Y(z)$ tends to~$1$ as~$z\to \infty$ by~\textit{(d)}, such that Liouville theorem implies $\det Y(z)=1$ identically in~$z$.
Therefore, for any pair of solutions $Y(z),\widetilde Y(z)$ we can define $R(z)=\widetilde Y(z)Y(z)^{-1}$.
Then, $R(z)$ extends analytically across $(0,+\infty)$ with possible isolated singularities at~$z=0,\widehat r_1,\dots,\widehat r_k$ by~\textit{(b)} and such singularities are removable by~\textit{(c)}, as it is clear if $\alpha\geq 0$ while if $-1<\alpha<0$ it follows from
\begin{align}
\nonumber
R(z)&=\widetilde Y(z)\begin{pmatrix}1 & 0 \\ d_\alpha+\alpha & 1
\end{pmatrix}\begin{pmatrix}1 & 0 \\ -d_\alpha-\alpha & 1
\end{pmatrix}Y(z)^{-1}
\\
&= \begin{pmatrix}
O(\log |z|) & O(|z|^\alpha) \\ O(\log |z|) & O(|z|^\alpha)
\end{pmatrix}\begin{pmatrix}
O(|z|^\alpha) & O(|z|^\alpha) \\ O(\log |z|) &O(\log |z|) 
\end{pmatrix},
\end{align}
as $z\to 0$ (where we also use that $\det Y(z)=1$), such that all entries of $R(z)$ are $O(|z|^\alpha\,\log|z|)=o(z^{-1})$ as $z\to 0$.
Finally, $R(z)$ tends to~$I$ as $z\to \infty$ by~\textit{(d)}, such that Liouville theorem implies $R(z)=I$ identically in~$z$, i.e., $\widetilde Y(z)=Y(z)$ identically in~$z$.

Next, let us prove that $Y(z)$ as defined in~\eqref{eq:Yexplicit} solves Riemann--Hilbert problem~\ref{RHP:Y}.
Condition~\textit{(a)} is evident.
Next, note that~\eqref{eq:integrable} implies the general relation
\be
\mathcal K_{x,t}^\sigma\psi=-2\pi\mathrm i\, \mathcal C_-(\psi \mathbf g^\top)\mathbf f
\ee
such that we can compute
\be\label{eq:Yminusf}
Y_-\mathbf f = \mathbf f-2\pi \mathrm i\, \mathcal C_-(\mathbf F\, \mathbf g^\top)\mathbf f=\mathbf f+\mathcal K_{x,t}^\sigma \mathbf F=\mathbf F
\ee
by the definition~\eqref{eq:defF}.
By the Sokhotski--Plemelj formula,~$\mathcal C_+-\mathcal C_-$ is the identity operator, hence
\be
Y_+-Y_-=-2\pi\mathrm i\, \bigl(\mathcal C_+-\mathcal C_-\bigr)(\mathbf F\, \mathbf g^\top)=-2\pi\mathrm i\,\mathbf F\, \mathbf g^\top = -2\pi\mathrm i\,Y_-\mathbf f\, \mathbf g^\top 
\ee
which proves condition~\textit{(b)}.
The part of condition~\textit{(c)} relative to $z\to \widehat r_i$ follows from Lemma~\ref{lemma:F} and the properties of Cauchy operators recalled above, cf.~\eqref{cauchylog}.
For the part relative to $z\to 0$, note that $Y_{i,j}=\delta_{i,j}-\mathcal C(F_ig_j)$ and~\eqref{cauchyendpoint} is sufficient to prove~\eqref{eq:Y0} when $\alpha\geq 0$ (using the fact that $g_i,F_i$ are $O(\lambda^{\frac \alpha 2})$ as $\lambda\to 0_+$, cf.~Lemma~\ref{lemma:F}).
When $-1<\alpha<0$, note that $g_1(\lambda)+(d_\alpha+\alpha) g_2(\lambda)=O(\lambda^{\frac \alpha 2+1})$ as $\lambda\to 0_+$, hence
\begin{align}
\nonumber
Y(z)\begin{pmatrix}1 & 0 \\ d_\alpha+\alpha & 1
\end{pmatrix}
&=\begin{pmatrix}
1-\mathcal C\bigl(F_1(g_1+(d_\alpha+\alpha) g_2)\bigr)(z) & -\mathcal C\bigl(F_1g_2\bigr)(z) \\ 
-\mathcal C\bigl(F_2(g_1+(d_\alpha+\alpha) g_2)\bigr)(z) & 1-\mathcal C\bigl(F_2g_2\bigr)(z)
\end{pmatrix}
\\&=\begin{pmatrix}
O(\log|z|) & O(|z|^\alpha) \\ O(\log|z|) & O(|z|^\alpha)
\end{pmatrix}
\end{align}
which proves~\eqref{eq:Y0}, thanks to~\eqref{cauchyendpoint}, also in this case.
Finally, condition~\textit{(d)} follows from~\eqref{eq:Yexplicit} thanks to the fast decay of $\sigma$ at $+\infty$, with $Y_1$ given by
\be
\label{eq:Y1explicit}
Y_1 =\int_0^{+\infty}\mathbf F(\lambda)\,\mathbf g(\lambda)^\top\mathrm d\lambda.
\ee

\textit{(ii)} It follows from the general theory of integrable operators (cf.~\cite[Lemma~2.8]{DIZ}) that the resolvent $\mathcal R_{x,t}^\sigma$ is also 2-integrable, with kernel
\be
\label{eq:resolv}
R_{x,t}^\sigma (\lambda,\mu)=\frac{\mathbf F(\lambda)^\top\,\mathbf G(\mu)}{\lambda-\mu},
\ee
where $\mathbf F$ is in~\eqref{eq:defF} and $\mathbf G=(1-\mathcal K_{x,t}^\sigma)^{-1}\mathbf g$.
On the other hand, the analytic properties of $Y(z)^{-\top}$ imply the representation
\be
Y(z)^{-\top}=I+2\pi\mathrm i\,\mathcal C\bigl(Y_-^{-\top}\mathbf g\,\mathbf f^\top\bigr)(z).
\ee
(Namely, the right-hand side has the same jump across $(0,+\infty)$ and the same asymptotics at $\infty$ as $Y^{-\top}$, easily implying the claimed equality by standard complex analytic arguments.)
Therefore, noting that
\be
\mathcal K_{x,t}^\sigma\psi=2\pi\mathrm i\, \mathcal C_-(\psi \mathbf f^\top)\mathbf g,
\ee
we observe that
\be
Y_-^{-\top}\mathbf g = \mathbf g+2\pi \mathrm i\, \mathcal C_-(Y_-^{-\top}\mathbf g\, \mathbf f^\top)\mathbf g=\mathbf g+\mathcal K_{x,t}^\sigma(Y_-^{-\top}\mathbf g)
\ee
implying $Y_-^{-\top} \mathbf g=\mathbf G$.
Applying this last identity, as well as~\eqref{eq:Yminusf}, to~\eqref{eq:resolv} we conclude the proof.

\textit{(iii)} We start by recalling~\eqref{eq:Jacobifinal} and noting that the operator $\partial_x\widetilde {\mathcal K}_{x,t}^\sigma$ has kernel
\be
\frac x2 \sqrt{\sigma(\lambda)}\mathrm J_\alpha(x\sqrt{\lambda-t})\mathrm J_\alpha(x\sqrt{\mu-t}) \sqrt{\sigma(\mu)}= -x \widetilde f_1(\lambda)\widetilde g_2(\mu),
\ee
where $\widetilde f_1(\lambda)=f_1\bigl(x^2(\lambda-t)\bigr)$ and $\widetilde g_2(\mu)=g_2\bigl(x^2(\mu-t)\bigr)$.
Therefore, applying~\eqref{eq:Jacobifinal} and computing explicitly the trace we get
\begin{align}
\nonumber
\partial_x\log Q_\sigma(x,t) &= x\int_t^{+\infty}\bigl((1-\widetilde {\mathcal K}_{x,t}^{\sigma})^{-1}\widetilde f_1\bigr)(\lambda)\,\widetilde g_2(\lambda)\mathrm d\lambda
\\
&=\frac 1x\int_0^{+\infty}F_1(\lambda)g_2(\lambda)\mathrm d\lambda=(Y_1)_{1,2}
\end{align}
where in the last step we use~\eqref{eq:Y1explicit}.
\end{proof}

\subsection{Bessel dressing}
We now perform an explicit transformation of Riemann--Hilbert problem~\ref{RHP:Y}, by a standard procedure of~\emph{dressing}.
This will be useful in order to derive differential equations which $Q_\sigma( x , t )$ satisfies.

Introduce the matrix function
\be
\label{eq:PhiBe}
\Phi(z)=
\frac {\sqrt\pi}2 \,C
\begin{pmatrix}
2\,c_\alpha^{-1}\,\mathrm{J}_\alpha(\sqrt z) & -\mathrm{i}\,c_\alpha\,\mathrm{H}_\alpha^{(1)}(\sqrt z) \\
2\,c_\alpha^{-1}\,\sqrt z \,\mathrm{J}_\alpha'(\sqrt z) & -\mathrm{i}\,c_\alpha\,\sqrt z\,\bigl(\mathrm{H}_\alpha^{(1)}\bigr)'(\sqrt z)
\end{pmatrix} ,\qquad z\in\mathbb C\setminus[0,+\infty) ,
\ee
with $C$ defined in~\eqref{eq:C} and
\be
\label{eq:calpha}
c_\alpha=\exp\biggl(\frac{2\alpha+1}4\pi\mathrm{i}\biggr) .
\ee
Here, the branches of the multivalued functions involved in this definition are taken as follows: $\sqrt z$ is the branch of the square root defined for~$z\in\mathbb C\setminus[0,+\infty)$ by $\sqrt z = \sqrt{|z|}\e^{\mathrm{i}\arg z/2}$ with $\arg z\in (0,2\pi)$; 
$\mathrm{H}_\alpha^{(1)}$ is the principal branch of the Hankel function of the first kind of order~$\alpha$ and $(\mathrm{H}_\alpha^{(1)})'$ is its derivative, analytic in $\mathbb C\setminus(-\infty,0]$;
similarly, we take the principal branch of the Bessel function $\mathrm{J}_\alpha$ and of its derivative $\mathrm{J}_\alpha'$, analytic in $\mathbb C\setminus(-\infty,0]$.
(Note that the argument of Bessel and Hankel functions appearing in this definition belong to the upper half-plane by our choice of branch for the square root.)

Standard properties of Bessel and Hankel functions (cf.~\cite[\S~10]{DLMF}) imply that $\Phi(z)$ is analytic in $z\in\mathbb C\setminus[0,+\infty)$, and the boundary values $\Phi_\pm(\lambda)=\lim_{\epsilon\to 0_+}\Phi(\lambda\pm\mathrm{i}\epsilon)$ are continuous for $\lambda\in(0,+\infty)$ and satisfy
\be
\label{eq:jumpPhi}
\Phi_+(\lambda) = \Phi_-(\lambda)\begin{pmatrix}
\e^{-\mathrm{i}\pi\alpha} & 1 \\ 0 & \e^{\mathrm{i}\pi\alpha} 
\end{pmatrix},
\quad \lambda \in (0,+\infty).
\ee
This follows from analytic continuation properties of Bessel and Hankel functions~\cite[\S~10.11]{DLMF}.
Moreover, for any $0<\delta<\pi$ we have
\be
\label{eq:PhiBeasympinfty}
\Phi(z) \sim \biggl(I + \Phi_1 z^{-1}+O(z^{-2})\biggl)z^{-\frac 14\sigma_3}\,G\,\e^{-\mathrm{i}\sqrt z\sigma_3}\times\begin{cases}
\left(\begin{smallmatrix}1 & 0 \\ \e^{-\mathrm{i}\pi\alpha} & 1
\end{smallmatrix}\right)&\mbox{if }0<\arg z<\delta, \\
I, & \delta<\arg z<2\pi-\delta, \\
\left(\begin{smallmatrix}1 & 0 \\ -\e^{\mathrm{i}\pi\alpha} & 1
\end{smallmatrix}\right) &\mbox{if }2\pi-\delta<\arg z<2\pi,
\end{cases}
\ee
as $z\to\infty$, uniformly within $\mathbb C\setminus[0,+\infty)$, where, $\sigma_3$ is the third Pauli matrix, $\sigma_3={\rm diag}(1,-1)$, the branch of $\sqrt z$ is as above and the branches of $z^{\pm\frac 14}=|z|^{\pm\frac 14}\e^{\pm\frac {\mathrm{i}} 4\arg z}$ are defined similarly by $\arg z\in (0,2\pi)$, and where
\be
\label{eq:Phi1BeG}
\Phi_1 =\begin{pmatrix}
 \frac{-16 \alpha^4+40 \alpha^2-9}{128}  & \frac{4 \alpha^2-1}{8} \\
 \frac{-64 \alpha^6+368 \alpha^4-556 \alpha^2+117}{1536} & \frac{16 \alpha^4-40 \alpha^2+9}{128} \\
\end{pmatrix},
\qquad G=\frac 1{\sqrt 2}\begin{pmatrix} 1 & -\mathrm i \\ -\mathrm i & 1\end{pmatrix} .
\ee
This follows from asymptotic properties of Bessel and Hankel functions~\cite[\S~10.17]{DLMF}.
Finally, we have the symmetry property
\be
\label{eq:symmetryPhi}
\overline{\Phi(\overline z)}
=\Phi(z)\,\e^{\mathrm{i}\frac \pi 2\sigma_3},\quad z\in \mathbb C\setminus[0,+\infty).
\ee
This follows from analytic continuation properties of Bessel and Hankel functions~\cite[\S~10.11]{DLMF}, as well as from our choice of the square root branch, satisfying $\sqrt {\overline z}=-\overline{\sqrt z}$ for all $z\in\mathbb C\setminus[0,+\infty)$.

\begin{remark}
\label{remark:reasonforC}
The reason for introducing the matrix factor $C$ in~\eqref{eq:PhiBe} and~\eqref{eq:fghat} is to have the asymptotics at infinity normalized as in~\eqref{eq:PhiBeasympinfty}.
Hovewever, this choice is quite inessential.
\end{remark}

The following identities will be useful later.

\begin{proposition}
\label{prop:identitiesfg}
We have, for all $\lambda\in(0,+\infty)$ and with $c_\alpha$ defined in~\eqref{eq:calpha},
\be
\label{eq:identitiesfg}
\widehat \f(\lambda)=\frac{c_\alpha}{\sqrt \pi} \Phi_+(\lambda) \begin{pmatrix}
1 \\ 0
\end{pmatrix},
\quad
\widehat \g(\lambda)= -\frac{c_\alpha}{\sqrt \pi} \Phi_+(\lambda)^{-\top} \begin{pmatrix}
0\\ 1\end{pmatrix}.
\ee
\end{proposition}
\begin{proof}
Use~\eqref{eq:PhiBe} and the fact that $\det\Phi(z)=1$ identically in~$z$,~cf.~\cite[\S~10.5]{DLMF}
\end{proof}

\begin{corollary}
\label{corollary:jump}
Let $J_Y(\lambda)$ be as in~\eqref{eq:Yjump}. Then
\be
J_Y(\lambda)=\Phi_-(\lambda)\begin{pmatrix}
\e^{-\mathrm{i}\pi\alpha} & 1-\sigma(x^{-2}\lambda+t) \\ 0 & \e^{\mathrm{i}\pi\alpha} 
\end{pmatrix}\Phi_+(\lambda)^{-1}.
\ee
\end{corollary}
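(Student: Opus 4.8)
The plan is to compute $J_Y(\lambda)$ directly from its definition~\eqref{eq:Yjump}, $J_Y(\lambda)=I-2\pi\mathrm{i}\,\f(\lambda)\,\g(\lambda)^\top$, by converting the rank-one term into a conjugate of a constant nilpotent matrix by $\Phi_+(\lambda)$, using Proposition~\ref{prop:identitiesfg}. First I would pull out the scalar prefactor via~\eqref{eq:fg}, $\f(\lambda)\,\g(\lambda)^\top=\tfrac12\,\sigma(x^{-2}\lambda+t)\,\widehat\f(\lambda)\,\widehat\g(\lambda)^\top$, and then insert the two identities~\eqref{eq:identitiesfg}. Writing $E=\left(\begin{smallmatrix}0 & 1\\ 0 & 0\end{smallmatrix}\right)$ and noting $\widehat\g(\lambda)^\top=-\tfrac{c_\alpha}{\sqrt\pi}\left(\begin{smallmatrix}0 & 1\end{smallmatrix}\right)\Phi_+(\lambda)^{-1}$ together with $\left(\begin{smallmatrix}1\\ 0\end{smallmatrix}\right)\left(\begin{smallmatrix}0 & 1\end{smallmatrix}\right)=E$, this yields
\be
\f(\lambda)\,\g(\lambda)^\top=-\frac{c_\alpha^2}{2\pi}\,\sigma(x^{-2}\lambda+t)\,\Phi_+(\lambda)\,E\,\Phi_+(\lambda)^{-1}.
\ee
The only arithmetic involved is the elementary identity $c_\alpha^2=\mathrm{i}\,\e^{\mathrm{i}\pi\alpha}$ from~\eqref{eq:calpha}, which, together with the factor $-2\pi\mathrm{i}$ in~\eqref{eq:Yjump}, produces the scalar $-\sigma(x^{-2}\lambda+t)\,\e^{\mathrm{i}\pi\alpha}$ multiplying $\Phi_+(\lambda)\,E\,\Phi_+(\lambda)^{-1}$.

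Next I would recombine the two summands of $J_Y(\lambda)$ by factoring $\Phi_+(\lambda)$ to the left and $\Phi_+(\lambda)^{-1}$ to the right, writing $I=\Phi_+(\lambda)\,I\,\Phi_+(\lambda)^{-1}$:
\be
J_Y(\lambda)=\Phi_+(\lambda)\bigl(I-\sigma(x^{-2}\lambda+t)\,\e^{\mathrm{i}\pi\alpha}E\bigr)\Phi_+(\lambda)^{-1}=\Phi_+(\lambda)\begin{pmatrix}1 & -\sigma(x^{-2}\lambda+t)\,\e^{\mathrm{i}\pi\alpha}\\ 0 & 1\end{pmatrix}\Phi_+(\lambda)^{-1}.
\ee
Finally, I would invoke the jump relation~\eqref{eq:jumpPhi} for $\Phi$ in the form $\Phi_+(\lambda)=\Phi_-(\lambda)\left(\begin{smallmatrix}\e^{-\mathrm{i}\pi\alpha} & 1\\ 0 & \e^{\mathrm{i}\pi\alpha}\end{smallmatrix}\right)$ to replace only the leftmost factor $\Phi_+(\lambda)$; multiplying out $\left(\begin{smallmatrix}\e^{-\mathrm{i}\pi\alpha} & 1\\ 0 & \e^{\mathrm{i}\pi\alpha}\end{smallmatrix}\right)\left(\begin{smallmatrix}1 & -\sigma\,\e^{\mathrm{i}\pi\alpha}\\ 0 & 1\end{smallmatrix}\right)$, the $(1,2)$-entry becomes $1-\e^{-\mathrm{i}\pi\alpha}\,\sigma\,\e^{\mathrm{i}\pi\alpha}=1-\sigma$, and the product is exactly $\left(\begin{smallmatrix}\e^{-\mathrm{i}\pi\alpha} & 1-\sigma(x^{-2}\lambda+t)\\ 0 & \e^{\mathrm{i}\pi\alpha}\end{smallmatrix}\right)$, which gives the claimed identity.

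I do not expect any genuine obstacle here: all quantities are boundary values on $(0,+\infty)$, so there are no branch-cut subtleties, and the argument uses only Proposition~\ref{prop:identitiesfg} and the already-established jump~\eqref{eq:jumpPhi}. The only point requiring care is the bookkeeping of the constant $c_\alpha$ and of the sign and factor-of-$2\pi\mathrm{i}$ conventions in~\eqref{eq:Yjump}; the choice to leave $\Phi_+(\lambda)^{-1}$ on the right while converting the left factor to $\Phi_-(\lambda)$ is what makes the middle matrix upper-triangular with the transparent $(1,2)$-entry $1-\sigma(x^{-2}\lambda+t)$, which is precisely the form needed for the dressing step that follows.
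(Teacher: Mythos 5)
Your proof is correct and follows essentially the same route as the paper: substitute Proposition~\ref{prop:identitiesfg} into the rank-one formula for $J_Y(\lambda)$, simplify the scalar via $c_\alpha^2=\mathrm{i}\,\e^{\mathrm{i}\pi\alpha}$ (equivalently, the paper's hint $\mathrm{i}\,\e^{-\mathrm{i}\pi\alpha}c_\alpha^2=-1$), and then use the jump relation~\eqref{eq:jumpPhi} to convert the leftmost $\Phi_+$ to $\Phi_-$. The paper states this in a single line; you have simply written out the arithmetic, and it all checks out.
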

\begin{proof}
It follows from Proposition~\ref{prop:identitiesfg} combined with~\eqref{eq:jumpPhi}. (Note that $\mathrm{i}\e^{-\mathrm{i}\pi\alpha}c_\alpha^2=-1$.)
\end{proof}

As anticipated, we now introduce a transformation (\emph{dressing}) of the matrix $Y$:
\be
\label{eq:Psi}
\Psi(z) =  \begin{pmatrix}
1 & 0 \\ -\frac {xt}2 & 1
\end{pmatrix}\,x^{\frac 12\sigma_3}\,Y\bigl(x^2(z-t)\bigr)\,\Phi\bigl(x^2(z-t)\bigr).
\ee
The Riemann--Hilbert conditions determining uniquely $Y$ change as follows.

\begin{RH}\label{RHP:Psi}~
\begin{itemize}
\item[(a)] $\Psi(z)$ is a $2\times 2$ matrix analytically depending on $z\in\mathbb C\setminus[t ,+\infty)$, for all $ x >0, t \in\mathbb R$.
\item[(b)] For all $\lambda\in(t,+\infty)\setminus\lbrace r_1,\dots, r_k\rbrace$, the boundary values $\Psi_\pm(\lambda)=\lim_{\epsilon\to 0_+}\Psi(\lambda\pm\mathrm{i}\epsilon)$ exist, are continuous, and satisfy
\be
\label{eq:Psijump}
\Psi_+(\lambda) = \Psi_-(\lambda) J_\Psi(\lambda),\quad J_\Psi(\lambda)=\begin{pmatrix}
\e^{-\mathrm{i}\pi\alpha} & 1-\sigma(\lambda) \\ 0 & \e^{\mathrm{i}\pi\alpha} 
\end{pmatrix}.
\ee
\item[(c)] As $z\to r_j$ away from the real axis we have $\Psi(z)=O\bigl(\log|z-r_j|\bigr)$, and as  $z\to t$ away from $(t,+\infty)$ we have
\be
\Psi(z)=\begin{cases}
O\bigl(\log|z-t|\bigr)\Phi\bigl(x^2(z-t)\bigr)&\mbox{if }\alpha\geq 0,\\
\begin{pmatrix}
O\bigl(\log|z-t|\bigr) & O\bigl(|z-t|^\alpha\bigr)
\\
O\bigl(\log|z-t|\bigr) & O\bigl(|z-t|^\alpha\bigr)
\end{pmatrix}\begin{pmatrix}
1 & 0 \\ -d_\alpha-\alpha & 1
\end{pmatrix}\Phi\bigl(x^2(z-t)\bigr)&\mbox{if } -1<\alpha<0,
\end{cases}
\ee
where $d_\alpha$ is defined in~\eqref{eq:C}.
\item[(d)] There exists a $2\times 2$ matrix $\Psi_1$, depending on $ x , t $ only, such that
\be
\label{eq:Psiasympinfty}
\Psi(z) \sim \biggl(I+\Psi_1 z^{-1}+O(z^{-2})\biggl)z^{-\frac{\sigma_3}4}G\e^{-\mathrm{i} x \sqrt z\sigma_3}\times\begin{cases}
\left(\begin{smallmatrix}1 & 0 \\ \e^{-\mathrm{i}\pi\alpha} & 1
\end{smallmatrix}\right) & 0<\arg z<\delta, \\
I, & \delta<\arg z<2\pi-\delta, \\
\left(\begin{smallmatrix}1 & 0 \\ -\e^{\mathrm{i}\pi\alpha} & 1
\end{smallmatrix}\right) & 2\pi-\delta<\arg z<2\pi,
\end{cases}
\ee
as $z\to\infty$, uniformly within $\mathbb C\setminus[0,+\infty)$, with the notations of~\eqref{eq:PhiBeasympinfty}.
\end{itemize}
\end{RH}

It is straightforward to show that the matrix $\Psi_1$ appearing in~\eqref{eq:Psiasympinfty} is given by
\be
\Psi_1=\frac 1{x^2}B(Y_1+\Phi_1)B^{-1}+
\begin{pmatrix}
 -\frac{t (x^2t -2)}{8} & -\frac{xt}{2} \\
 \frac{xt^2(x^2t -3)}{24} & \frac{t (x^2t -2)}{8} 
\end{pmatrix},\qquad B=\begin{pmatrix}
1 & 0 \\ -\frac {xt}2 & 1
\end{pmatrix}x^{\frac 12\sigma_3}.
\ee
In particular, by~\eqref{eq:Phi1BeG} we have
\be
\label{eq:vpaxQ}
\bigl(\Psi_1\bigr)_{1,2}=\frac 1x\bigl(Y_1\bigr)_{1,2}-\frac {xt}2+\frac{4\alpha^2-1}{8x}=\partial_x\log Q_\sigma(x,t)-\frac {xt}2+\frac{4\alpha^2-1}{8x}
\ee

\begin{remark}
Note that the jump matrix $J_\Psi$ does not depend on the parameters $ x >0,\,t \in\mathbb R$.
This is the main advantage of the transformation from $Y$ to $\Psi$.
\end{remark}

\begin{proposition}
The matrix function $\Psi$ defined in~\eqref{eq:Psi} is the unique solution to Riemann--Hilbert problem~\ref{RHP:Psi} and it satisfies
\be
\label{eq:symmetry}
\overline{\Psi(\overline z)}=\Psi(z)\,\e^{\mathrm{i}\frac{\pi}2\sigma_3},\quad z\in\mathbb C\setminus[t,+\infty).
\ee
\end{proposition}

\begin{proof}
All the Riemann--Hilbert conditions for~$\Psi$ follow easily from the definition~\eqref{eq:Psi}.
In particular, condition~\textit{(b)} follows from Corollary~\ref{corollary:jump}.
Moreover, uniqueness of the solution follows from the uniqueness of the solution to the Riemann--Hilbert problem for~$Y$: indeed, different solutions~$\Psi$ could be used via~\eqref{eq:Psi} to define different solutions~$Y$.

Finally, it can be checked by using the conditions~\textit{(a)}--\textit{(d)} in Riemann--Hilbert problem~\ref{RHP:Psi} and the identity~\eqref{eq:symmetryPhi}, that $\overline{\Psi(\overline z)}\,\e^{-\mathrm{i}\frac{\pi}2\sigma_3}$ satisfies the exact same conditions, whence~\eqref{eq:symmetry} follows from the uniqueness of the solution.
\end{proof}

In the next proposition, we look more carefully into the behavior of~$\Psi(z)$ when $z\to t $ and when $z\to r_j$.

\begin{proposition}
\label{proposition:behavioratzero}
There exists $\rho>0$ sufficiently small such that the following statements hold.

\textit{(i)} We have
\be
\label{eq:behavioratzero}
\Psi(z) = E(z)\begin{pmatrix}
1 & h(z; t ) \\ 0 & 1
\end{pmatrix}(z- t)^{\frac\alpha 2\sigma_3}
\ee
where $E(z)$ is analytic in $z$ for $|z-t|<\rho$, and
\be 
\label{eq:h}
h(z;t)=\frac 1{2\pi\mathrm{i}}\int_0^{\rho} u^{\alpha}\bigl(\e^{\mathrm{i}\pi\alpha}+\mathrm{i}\sigma(u+t)\bigr)\frac{\mathrm{d} u}{u+t-z} .
\ee
In~\eqref{eq:behavioratzero}, we take the branch of $(z-t)^{\frac\alpha  2\sigma_3}$ which is discontinuous on $[t,+\infty)$ and with positive limiting values when $z$ approaches $( t ,+\infty)$ from above.

\textit{(ii)}
For all $j=1,\dots,k$ such that $r_j>t$, we have
\be
\label{eq:behavioratrj}
\Psi(z) = E_j(z)\begin{pmatrix}
1 & h_j(z) \\ 0 & 1
\end{pmatrix}\e^{\mp \mathrm i\pi\frac{\alpha}2\sigma_3},\quad \pm\Im z>0,
\ee
where $E_j(z)$ is analytic for $|z-r_j|<\rho$ and $h_j(z)=\frac 1{2\pi\mathrm i}\int_{r_j-\rho}^{r_j+\rho}\frac{1-\sigma(\lambda)}{\lambda-z}\mathrm d\lambda$.
\end{proposition}
\begin{proof}~

\textit{(i)}
Let $\widehat E(z)=\Psi(z)(z- t )^{-\frac\alpha  2\sigma_3}$.
It follows from condition~\textit{(b)} in Riemann--Hilbert problem~\ref{RHP:Psi} that the boundary values $\widehat E_\pm(\lambda)=\lim_{\epsilon\to 0_+}\widehat E(\lambda\pm\mathrm{i}\epsilon)$ at $\lambda\in( t ,+\infty)$ are continuous and related as
\be
\widehat E_+(\lambda)=
\widehat E_-(\lambda)
\begin{pmatrix}
1 & (\lambda-t)^\alpha\bigl(\e^{\mathrm{i}\pi\alpha}+\mathrm{i}\sigma(\lambda)\bigr) \\ 0 & 1
\end{pmatrix},\qquad \lambda\in(t,+\infty).
\ee
On the other hand, $h(z; t )$ in the statement is analytic for $z\in\mathbb C\setminus [ t , t +\rho]$ and the boundary values $h_\pm(\lambda; t )=\lim_{\epsilon\to 0_+}h(\lambda\pm\mathrm{i}\epsilon; t )$ are continuous in $\lambda\in( t , t +\rho)$ (provided $\rho$ is sufficiently small to avoid the discontinuities of $\sigma$) and related as
\be
h_+(\lambda; t )=h_-(\lambda; t )+(\lambda- t )^\alpha\bigl(\e^{\mathrm{i}\pi\alpha}+\mathrm{i}\sigma(\lambda)\bigr),\qquad \lambda\in( t , t +\rho),
\ee 
as it follows from the Sokhotski--Plemelj formula.
Combining these facts, and using Morera thoerem, we obtain that
\be
E(z)=\widehat E(z)\begin{pmatrix}
1 & -h(z; t ) \\ 0 & 1
\end{pmatrix}
\ee
extends analytically across $( t , t +\rho)$, and is therefore analytic in $z$ for $|z-t|<\rho$ (the isolated singularity at $z=t$ is removable). 

\textit{(ii)}
The proof is similar to the previous point.
We start by defining $\widehat E_j(z)=\Psi(z)\e^{\pm \mathrm i\pi\frac{\alpha}2\sigma_3}$ ($\pm\Im z>0$) for $|z-r_j|<\rho$ with $\rho>0$ sufficiently small to avoid other singularities of $\sigma$ as well as $t$.
Then $\widehat E_j$ has boundary values $\widehat E_{j,\pm}(\lambda)=\lim_{\epsilon\to 0_+}\widehat E_j(\lambda\pm\mathrm i\epsilon)$ which are continuous for $\lambda\in (r_j-\rho,r_j+\rho)\setminus\lbrace r_j\rbrace$ and satisfy
\be
\widehat E_{j,+}(\lambda)=
\widehat E_{j,-}(\lambda)\begin{pmatrix}
1 & 1-\sigma(\lambda) \\ 0 & 1
\end{pmatrix},\qquad  \lambda\in (r_j-\rho,r_j+\rho)\setminus\lbrace r_j\rbrace.
\ee
On the other hand, $h_j(z)$ in the statement is analytic for $z\in\mathbb C\setminus [ r_j-\rho,r_j +\rho]$ and the boundary values $h_{j,\pm}(\lambda)=\lim_{\epsilon\to 0_+}h_j(\lambda\pm\mathrm{i}\epsilon)$ are continuous in $\lambda\in(r_j-\rho,r_j+\rho)\setminus\lbrace r_j\rbrace$ and related as
\be
h_{j,+}(\lambda)=h_{j,-}(\lambda)+1-\sigma(\lambda),\qquad \lambda\in(r_j-\rho,r_j+\rho)\setminus\lbrace r_j\rbrace,
\ee 
as it follows from the Sokhotski--Plemelj formula.
Combining these facts, and using Morera thoerem, we obtain that
\be
E_j(z)=\widehat E_j(z)\begin{pmatrix}
1 & -h_j(z) \\ 0 & 1
\end{pmatrix}
\ee
extends analytically across $(r_j-\rho,r_j+\rho)$, and is therefore analytic in $z$ for $|z-r_j|<\rho$ (the isolated singularity at $z=r_j$ is removable). 
\end{proof}

For later convenience, we report following estimates as $z\to t$, where $h(z;t)$ is given in~\eqref{eq:h}:
\be
\label{eq:asymph}
h(z;t)=o\biggl(\frac 1{z-t}\biggr),\qquad \partial_zh(z;t)=O\biggl(\frac 1{z-t}\biggr),\qquad \partial_th(z;t)=O\biggl(\frac 1{z-t}\biggr).
\ee
These asymptotics follow from general properties of Cauchy integrals near the endpoints of integration contour, cf.~\cite{Musk}.

\section{Differential equations}
\subsection{The Lax pair}
Thanks to~\eqref{eq:vpaxQ} we can express $v=v(x,t)$ defined in~\eqref{eq:defvstatement} as 
\be
\label{eq:v}
v( x , t )=-\bigl(\Psi_1\bigr)_{1,2}.
\ee
Let us introduce
\be
\label{eq:Theta}
\Theta(z)=\begin{pmatrix}
1 & 0 \\ v( x , t ) & 1
\end{pmatrix}\Psi(z) .
\ee
\begin{proposition}
\label{prop:equations}
Denoting $v=v(x,t)$, we have
\begin{align}
\label{eq:diffeqx}
\partial_x\Theta(z)&=\begin{pmatrix} 0 & 1\\ -z+2\,v_x & 0
\end{pmatrix}\Theta(z),\\
\label{eq:diffeqt}
\partial_t\Theta(z)&=\biggl[\begin{pmatrix}0&0\\ v_t & 0
\end{pmatrix}+\frac{1}{z-t}V(x,t)
\biggr]\Theta(z),
\end{align}
where
\be
\label{eq:S}
V(x,t)=\begin{pmatrix}
\frac 12v_{xt} & -v_t \\  t v_ t-2v_ t v_ x+\frac 12 v_{xxt} &-\frac 12v_{xt}
\end{pmatrix}.
\ee
Moreover, $v$ satisfies
\be
\label{eq:differ}
\bigl(t-2v_x\bigr)v_{xt}+\frac 14v_{xxxt}-v_{xx}v_t=0,
\ee
and
\be
\label{eq:PDE}
(2v_x-t){v_t}^2+\frac 14{v_{xt}}^2-\frac 12v_tv_{xxt}=\frac{\alpha^2}4.
\ee
\end{proposition}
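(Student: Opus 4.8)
The plan is to extract the Lax pair \eqref{eq:diffeqx}--\eqref{eq:diffeqt} from RH problem~\ref{RHP:Psi} by the standard dressing argument, then to obtain \eqref{eq:differ} as its zero-curvature condition and \eqref{eq:PDE} from the determinant of the residue at $z=t$. Set $L=\left(\begin{smallmatrix}1&0\\v&1\end{smallmatrix}\right)$, so $\Theta=L\Psi$, and study $U(z):=(\partial_x\Theta)\Theta^{-1}=(\partial_x L)L^{-1}+L\bigl((\partial_x\Psi)\Psi^{-1}\bigr)L^{-1}$. Since $J_\Psi$ in \eqref{eq:Psijump} is independent of $x$, $U$ has no jump on $(t,+\infty)$; since the local factors $\left(\begin{smallmatrix}1&h\\0&1\end{smallmatrix}\right)(z-t)^{\frac\alpha2\sigma_3}$ of Proposition~\ref{proposition:behavioratzero} are independent of $x$, one has $(\partial_x\Psi)\Psi^{-1}=(\partial_x E)E^{-1}$ near $z=t$, so $U$ is analytic there and hence entire. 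Differentiating \eqref{eq:Psiasympinfty} (the factors $z^{-\sigma_3/4}$, $G$ and the triangular factors are $x$-independent, $\partial_x\e^{-\mathrm{i} x\sqrt z\sigma_3}=-\mathrm{i}\sqrt z\,\sigma_3\,\e^{-\mathrm{i} x\sqrt z\sigma_3}$, and the prefactor $I+\Psi_1z^{-1}+\cdots$ contributes $O(z^{-1})$) and using $\det\Psi\equiv 1$ gives $(\partial_x\Psi)\Psi^{-1}=\left(\begin{smallmatrix}0&1\\-z&0\end{smallmatrix}\right)+O(1)$, so $U$ is a matrix polynomial of degree at most one; its $O(1)$ correction, expressed through $\Psi_1$, is $\left(\begin{smallmatrix}v&0\\c&-v\end{smallmatrix}\right)$ with $c=(\Psi_1)_{11}-(\Psi_1)_{22}$ and $v=-(\Psi_1)_{12}$ (cf.~\eqref{eq:v}), whence $(\partial_x\Psi)\Psi^{-1}=\left(\begin{smallmatrix}v&1\\-z+c&-v\end{smallmatrix}\right)$. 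Inserting the full expansion of $\Psi$ into $\partial_x\Psi=\left(\begin{smallmatrix}v&1\\-z+c&-v\end{smallmatrix}\right)\Psi$ and comparing coefficients of $z^{-1}$ forces $c=v_x-v^2$; conjugating back by $L$ then gives exactly \eqref{eq:diffeqx}.

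The $t$-equation is treated analogously for $W(z):=(\partial_t\Theta)\Theta^{-1}$. As $J_\Psi$ is also independent of $t$, $W$ extends analytically across $(t,+\infty)$; near $z=t$, Proposition~\ref{proposition:behavioratzero} with the estimates \eqref{eq:asymph} shows the contributions of $\partial_t h$ and of $\partial_t(z-t)^{\frac\alpha2\sigma_3}$ combine into at worst a simple pole; as $z\to\infty$, since $\partial_t$ does not reach the $z$-dependent factors in \eqref{eq:Psiasympinfty}, $(\partial_t\Psi)\Psi^{-1}=O(z^{-1})$, while $(\partial_t L)L^{-1}=\left(\begin{smallmatrix}0&0\\v_t&0\end{smallmatrix}\right)$. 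Hence $W=\left(\begin{smallmatrix}0&0\\v_t&0\end{smallmatrix}\right)+\tfrac1{z-t}V$ for some matrix $V=V(x,t)$, traceless since $\det\Theta\equiv1$; this is \eqref{eq:diffeqt}, with $V$ not yet explicit.

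To finish I would impose the zero-curvature condition $\partial_t U-\partial_x W+[U,W]=0$ (valid since $\partial_x\partial_t\Theta=\partial_t\partial_x\Theta$), with $U$ as in \eqref{eq:diffeqx}. Collecting the $z$-independent part forces $V_{11}=\tfrac12 v_{xt}$, $V_{12}=-v_t$, $V_{22}=-\tfrac12 v_{xt}$; collecting the coefficient of $(z-t)^{-1}$, after separating the polynomial-in-$z$ part of the commutator, yields $V_{21}=tv_t-2v_tv_x+\tfrac12 v_{xxt}$ (so that $V$ is \eqref{eq:S}) together with $\partial_x V_{21}=(2v_x-t)v_{xt}$; substituting the formula for $V_{21}$ into this last relation gives precisely \eqref{eq:differ}. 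Finally, by Proposition~\ref{proposition:behavioratzero} the polar part of $W$ at $z=t$ equals $\tfrac1{z-t}\,\mathcal E(t)\bigl(-\tfrac\alpha2\sigma_3+\rho\left(\begin{smallmatrix}0&1\\0&0\end{smallmatrix}\right)\bigr)\mathcal E(t)^{-1}$ with $\mathcal E$ analytic and invertible at $z=t$ and $\rho$ a scalar, so $\det V=-\tfrac{\alpha^2}4$; expanding $\det V=V_{11}V_{22}-V_{12}V_{21}$ with the entries above gives \eqref{eq:PDE}. The main obstacles I foresee are the analyticity-and-growth bookkeeping---notably checking that the a priori worse-than-simple-pole terms of $W$ at $z=t$ really cancel, using \eqref{eq:asymph} and the explicit formula \eqref{eq:h} for $h$---and the order-$z^{-1}$ matching pinning down $c=v_x-v^2$; the remaining algebra is elementary.
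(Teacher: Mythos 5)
Your proof is correct and follows essentially the same route as the paper: both extract the Lax pair from the $x$- and $t$-independence of $J_\Psi$ together with the local parametrix of Proposition~\ref{proposition:behavioratzero}, pin down $c=v_x-v^2$ from the $z^{-1}$ term of the large-$z$ expansion, obtain~\eqref{eq:S} and~\eqref{eq:differ} from the zero-curvature relation, and derive~\eqref{eq:PDE} from the local structure at $z=t$. The only cosmetic difference is in the last step: you read $\det V=-\alpha^2/4$ directly off the conjugated residue, while the paper matches the double pole of $\tr\bigl(M(z)^2\bigr)=\tr\bigl(\partial_t^2\Theta\cdot\Theta^{-1}\bigr)$ at $z=t$; these are equivalent since $\tr(V^2)=-2\det V$ for traceless $V$.
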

\begin{remark}
Equation~\eqref{eq:differ} is (up multiplication by~$v_t$) the $x$-derivative of equation~\eqref{eq:PDE}.
\end{remark}
\begin{proof}
We follow a standard argument based on the fact that $J_\Psi$ does not depend on the parameters~$ x , t $, cf.~\cite[Section~3]{CCR} for more details.

The ratio $L(z)=\partial_ x \Theta(z)\cdot\Theta^{-1}(z)$, as a function of~$z$, has no discontinuity across $( t ,+\infty)$ and is~$O(1)$ at~$z= t $.
It follows that $L(z)$ is entire in $z$.
As such, it can be determined by the asymptotics at~$z=\infty$.
The latter are explicit in terms of~\eqref{eq:Psiasympinfty}, and, after a simple calculation, we get
\be
\label{eq:rdef}
L(z)=\begin{pmatrix}
0&1\\-z+2r+v^2+ v_x & 0
\end{pmatrix},
\qquad
r=\bigl(\Psi_1\bigr)_{1,1}.
\ee
Moreover, the $(1,2)$-entry of the coefficient of $z^{-1}$ in the Laurent series of $L(z)$ at $z=\infty$ can also be computed using~\eqref{eq:Psiasympinfty} to be $2r+v^2-v_x$; however, we just proved that $L(z)$ is a polynomial in $z$, whence $2r+v^2=v_x$ and the proof of~\eqref{eq:diffeqx} is complete.

Similarly, the ratio $M(z)=\partial_ t \Theta(z)\cdot\Theta^{-1}(z)$, as a function of~$z$, has no discontinuity across $( t ,+\infty)$ and is $O\bigl((z- t )^{-1}\bigr)$ at~$ t $, as it follows from Proposition~\ref{proposition:behavioratzero} and the estimates~\eqref{eq:asymph}.
Hence, $M(z)$ is meromorphic in~$z$ with a unique pole at~$z= t $, which is simple:
by Liouville theorem and~\eqref{eq:Psiasympinfty} we get
\be
M(z)=\begin{pmatrix}0&0\\ v_t & 0\end{pmatrix}
+\frac 1{z-t}\begin{pmatrix}a & b \\ c & -a\end{pmatrix},
\ee
for certain $a=a( x , t )$, $b=b( x , t )$, and $c=c( x , t )$.
We use here that $M(z)$ is traceless because $\det\Theta(z)=1$ identically in~$z$.
Equating cross derivatives $\partial_ x \partial_ t \Theta=\partial_ t \partial_ x \Theta$ we obtain the \textit{zero-curvature equation}
\be
\partial_ t  L-\partial_ x  M=ML-LM.
\ee
Expanding this equation at $z=\infty$ we obtain the following four identities:
\be
b+v_t=0,\quad
2\,a+b_x=0,\quad
2\,b\,v_x+a_x- t \, b-c=0,\quad
4\,a\,v_x-2\, t \,a-c_ x=0.
\ee
We use the first three to determine $a,b,c$ in terms of~$v$ and its derivatives, as claimed in~\eqref{eq:S}. 
The fourth relation implies~\eqref{eq:differ}.

Finally, we note that $\tr(M(z)^2)=\tr\bigl(\partial_t^2\Theta(z)\cdot\Theta(z)^{-1}\bigr)$.
We can expand this identity near $z= t $: using the explicit formula for $M$ in~\eqref{eq:diffeqt} and~\eqref{eq:S} we can expand the left-hand side as
\be
\frac 1{(z- t )^2}\biggl[\bigl(4v_x-2 t \bigr){v_t}^2+\frac 12{v_{xt}}^2-v_{xxt}v_t\biggr]+O\bigl((z- t )^{-1}\bigr),
\ee
while expanding the right-hand side directly using Proposition~\ref{proposition:behavioratzero} and the estimates~\eqref{eq:asymph} we get
\be
\frac {\alpha^2}{2\,(z- t )^2}+O\bigl((z- t )^{-1}\bigr).
\ee
Comparing the last two expansions proves~\eqref{eq:PDE}.
\end{proof}

Introduce
\be
\label{eq:varphi1}
\varphi(z;x,t)=\frac 1{\sqrt{2\pi}}\bigl(\Theta(z)\bigr)_{1,1}=\frac 1{\sqrt{2\pi}}\bigl(\Psi(z)\bigr)_{1,1}.
\ee
It follows from~\eqref{eq:diffeqx} that
\be
\label{eq:varphi2}
\bigl(\Theta(z)\bigr)_{2,1}=\sqrt{2\pi}\,\partial_ x \varphi(z;x,t)
\ee
and that $\varphi$ solves
\be
\label{eq:Schrodinger}
\bigl(-\partial_x^2+2v_x\bigr)\varphi(z;x,t)=z\,\varphi(z;x,t).
\ee
Note also that $\varphi(z)$ is analytic in $z\in\mathbb C\setminus[ t ,+\infty)$, with monodromy around the origin described by the jump relation
\be
\label{eq:monodromyvarphi}
\varphi_+(\lambda;x,t)=\e^{-\mathrm{i}\pi\alpha}\varphi_-(\lambda;x,t),\qquad\varphi_\pm(\lambda;x,t)=\lim_{\epsilon\to 0_+}\varphi(\lambda\pm\mathrm{i}\epsilon;x,t),\qquad \lambda\in ( t ,+\infty),
\ee
as it follows from~\textit{(b)} in Riemann--Hilbert problem~\ref{RHP:Psi}.
Moreover, it follows from~\eqref{eq:symmetry} that
\be
\overline{\varphi(\overline z;x,t)}=\mathrm{i}\,\varphi(z).
\ee
Combined with~\eqref{eq:monodromyvarphi}, we get
\be
\label{eq:conjugate}
\overline{\varphi_+(\lambda;x,t)}=\lim_{\epsilon\to 0_+}\overline{\varphi(\lambda+\mathrm{i}\epsilon;x,t)}
=\lim_{\epsilon\to 0_+}\mathrm{i}\,\varphi(\lambda-\mathrm{i}\epsilon;x,t)=\mathrm{i}\,\varphi_-(\lambda;x,t)=\mathrm{i}\,\e^{\mathrm{i}\pi\alpha}\,\varphi_+(\lambda;x,t).
\ee

\subsection{Integral relations}

The jump matrix $J_\Psi(\lambda)$ is not piecewise constant in~$\lambda$ and so we cannot complement the Lax pair of the last section with an ordinary differential equation in $z$ for $\Theta(z)$ with meromorphic coefficients, cf.~\cite{CD}.
However, we can get additional information for our purposes as follows.
Let us start with a simple consequence of Proposition~\ref{proposition:behavioratzero}.

\begin{proposition}
\label{prop:pawpaeta}
As $z\to t $, we have
\be
\partial_z\Theta(z)\cdot\Theta(z)^{-1}+\partial_t\Theta(z)\cdot\Theta(z)^{-1}=o\biggl(\frac1{z-t}\biggr).
\ee
\end{proposition}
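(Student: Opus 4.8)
The plan is to combine the behaviour of $\Psi$ near $z=t$ from Proposition~\ref{proposition:behavioratzero} with the explicit form of the $t$-equation in Proposition~\ref{prop:equations}. Write $\Psi(z)=E(z)N(z)$ with $N(z)=\left(\begin{smallmatrix}1 & h(z;t) \\ 0 & 1\end{smallmatrix}\right)(z-t)^{\frac\alpha2\sigma_3}$ and $E$ analytic near $z=t$. Since $\Theta(z)=\left(\begin{smallmatrix}1 & 0 \\ v & 1\end{smallmatrix}\right)\Psi(z)$ differs from $\Psi$ by a left multiplication by a matrix analytic in $z$ (though $t$-dependent), and since conjugation by such a matrix does not change the order of the singularity, it suffices to control $\partial_z\Psi\cdot\Psi^{-1}+\partial_t\Psi\cdot\Psi^{-1}$ up to $o\bigl(1/(z-t)\bigr)$ and then track the effect of the gauge factor separately.

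The key observation is that $h(z;t)$ depends on $z$ and $t$ only through the combination $z-t$ in the \emph{Cauchy kernel} $1/(u+t-z)$, so that $\partial_z h+\partial_t h = 0$ identically. Hence $\partial_z N(z)+\partial_t N(z)$ has \emph{no} $\partial_z h$ or $\partial_t h$ term: the only surviving singular contribution comes from $\partial_z(z-t)^{\frac\alpha2\sigma_3}+\partial_t(z-t)^{\frac\alpha2\sigma_3}=0$ as well, since $(z-t)^{\frac\alpha2\sigma_3}$ also depends only on $z-t$. Therefore $\partial_z N\cdot N^{-1}+\partial_t N\cdot N^{-1}=0$, and consequently
\[
\partial_z\Psi\cdot\Psi^{-1}+\partial_t\Psi\cdot\Psi^{-1}
= \bigl(\partial_z E+\partial_t E\bigr)E^{-1},
\]
which is \emph{analytic} at $z=t$ (both $E$ and its derivatives are), hence certainly $o\bigl(1/(z-t)\bigr)$. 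The estimates~\eqref{eq:asymph} are then not even strictly needed for the leading order, but they provide the uniform control that justifies term-by-term manipulation.

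Finally I would account for the gauge transformation. Writing $G_v=\left(\begin{smallmatrix}1 & 0 \\ v & 1\end{smallmatrix}\right)$, one has $\partial_z\Theta\cdot\Theta^{-1}=G_v\,(\partial_z\Psi\cdot\Psi^{-1})\,G_v^{-1}$ (as $G_v$ is $z$-independent) and $\partial_t\Theta\cdot\Theta^{-1}=G_v\,(\partial_t\Psi\cdot\Psi^{-1})\,G_v^{-1}+(\partial_t G_v)G_v^{-1}$, and $(\partial_t G_v)G_v^{-1}=\left(\begin{smallmatrix}0 & 0 \\ v_t & 0\end{smallmatrix}\right)$ is analytic (indeed constant) in $z$. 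Adding the two,
\[
\partial_z\Theta\cdot\Theta^{-1}+\partial_t\Theta\cdot\Theta^{-1}
= G_v\bigl[(\partial_z E+\partial_t E)E^{-1}\bigr]G_v^{-1}+\begin{pmatrix}0&0\\v_t&0\end{pmatrix},
\]
which is analytic at $z=t$ and in particular $o\bigl(1/(z-t)\bigr)$, as claimed.

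The only mild subtlety — and the step I would write out most carefully — is the bookkeeping of $t$-derivatives acting on the branch cut of $(z-t)^{\frac\alpha2\sigma_3}$ and on the contour endpoint $t$ inside the integral defining $h(z;t)$ in~\eqref{eq:h}: one must check that differentiating under the integral sign is legitimate near $z=t$ and that the boundary term from the moving endpoint $u=0$ is harmless (it vanishes because the integrand carries the factor $u^\alpha$ with $\alpha>-1$). This is exactly what the estimates~\eqref{eq:asymph} encode, so invoking them closes the argument cleanly.
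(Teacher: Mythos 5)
Your proof has a genuine gap. The pivotal claim — that ``$h(z;t)$ depends on $z$ and $t$ only through the combination $z-t$ \ldots\ so that $\partial_z h+\partial_t h = 0$ identically'' — is false. Look again at the definition~\eqref{eq:h}: the integrand contains $\sigma(u+t)$, which depends on $t$ directly and \emph{not} through $z-t$. Differentiating gives
\[
\partial_z h(z;t)+\partial_t h(z;t)=\frac{1}{2\pi}\int_0^\rho u^\alpha\,\sigma'(u+t)\,\frac{\d u}{u+t-z},
\]
which is certainly not zero. Consequently $\partial_z N\cdot N^{-1}+\partial_t N\cdot N^{-1}=\bigl(\begin{smallmatrix}0 & h_z+h_t\\ 0 & 0\end{smallmatrix}\bigr)\neq 0$, and your conclusion that $\partial_z\Psi\cdot\Psi^{-1}+\partial_t\Psi\cdot\Psi^{-1}$ is \emph{analytic} at $z=t$ is too strong and incorrect. (The term you correctly identify as vanishing is only the $(z-t)^{\frac\alpha2\sigma_3}$ contribution, since that factor really does depend on $z-t$ alone.)

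What actually saves the proposition is not analyticity but a one-order improvement: the residual Cauchy integral above is $o\bigl(1/(z-t)\bigr)$ as $z\to t$, by the standard behaviour of Cauchy integrals near the endpoint of the contour (with a continuous density and a factor $u^\alpha$, $\alpha>-1$). This is precisely what the paper's proof invokes, via~\cite[Chapter~4]{Musk}. Note also that you cannot recover this from the listed estimates~\eqref{eq:asymph} alone: those give $\partial_z h=O\bigl(1/(z-t)\bigr)$ and $\partial_t h=O\bigl(1/(z-t)\bigr)$ \emph{separately}, which only yields $O\bigl(1/(z-t)\bigr)$ for the sum — not the required $o\bigl(1/(z-t)\bigr)$. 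The cancellation of the $1/(u+t-z)^2$ pieces in $\partial_z h$ and $\partial_t h$, leaving only the milder $\sigma'(u+t)/(u+t-z)$ term, must be made explicit, and that is the step your proposal is missing. Your final caveat about ``moving endpoints'' and ``boundary terms'' misidentifies the source of the subtlety: the endpoint $u=0$ is fixed; the issue is the $t$-dependence inside the density $\sigma(u+t)$.

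Your handling of the gauge factor $G_v$ and the reduction from $\Theta$ to $\Psi$ is fine; the error is confined to the treatment of $N(z)$. With the corrected computation of $h_z+h_t$ and the Cauchy-integral estimate, your argument would become essentially the paper's proof.
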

\begin{proof}
With the notation of Proposition~\ref{proposition:behavioratzero}, first point, we have $\Theta=\widetilde E\begin{pmatrix}
1 & h \\ 0 & 1
\end{pmatrix}(z- t )^{\frac\alpha  2\sigma_3}$ with $\widetilde E=\begin{pmatrix}
1 & 0 \\ v & 1 \end{pmatrix}E$, such that we can compute
\begin{align}
\Theta_z\,\Theta^{-1}&=\widetilde E_z\, \widetilde E^{-1}
+\widetilde E\begin{pmatrix}
0 & h_z \\ 0 & 0 
\end{pmatrix}\widetilde E^{-1}
+\widetilde E\begin{pmatrix}
1 & h \\ 0 & 1
\end{pmatrix}
\frac{\tfrac\alpha 2\sigma_3}{z- t }
\begin{pmatrix}
1 & -h \\ 0 & 1
\end{pmatrix}\widetilde E^{-1},
\\
\Theta_t\,\Theta^{-1}&=\widetilde E_t\,\widetilde E^{-1}
+\widetilde E\begin{pmatrix}
0 & h_t \\ 0 & 0 
\end{pmatrix}\widetilde E^{-1}-
\widetilde E
\begin{pmatrix}
1 & h \\ 0 & 1
\end{pmatrix}\frac{\tfrac\alpha  2\sigma_3}{z- t }\begin{pmatrix}
1 & -h \\ 0 & 1
\end{pmatrix}\widetilde E^{-1}.
\end{align}
Therefore the proof is complete by adding these expressions and using the following estimates as $z\to t $:
\begin{align}
\widetilde E_z\widetilde E^{-1}=O(1),\quad
\widetilde E_t\widetilde E^{-1}=O(1),\quad
h_z+h_t=\int_0^\rho\mathrm{i} u^\alpha\sigma'(u+t)\frac{\mathrm{d} u}{u+ t -z}=o\biggl(\frac1{z-t}\biggr),
\end{align}
where the last one follows from general properties of Cauchy integrals near the endpoints of the integration contour, cf.~\cite[Chapter~4]{Musk}.
\end{proof}

\begin{proposition}
\label{prop:nonlocal}
In terms of~\eqref{eq:varphi1} and $c_\alpha$ defined in~\eqref{eq:calpha}, we have
\begin{align}
\label{eq:dtlogF}
v_t-\frac x2&=c_\alpha^2\int_ t ^{+\infty}\varphi_+(\lambda;x,t)^2\,\mathrm d\sigma(\lambda),
\\
\label{eq:dxlogF}
x  t - (x v)_x&=2c_\alpha^2\int_ t ^{+\infty}(\lambda- t )\,\varphi_+(\lambda;x,t)^2\,\mathrm d\sigma (\lambda).
\end{align}
\end{proposition}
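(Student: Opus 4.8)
The plan is to extract both identities from the single scalar function $n(z):=\bigl(\partial_z\Theta(z)\cdot\Theta(z)^{-1}\bigr)_{1,2}$, sectionally analytic in $z\in\mathbb C\setminus[t,+\infty)$, by reading off, in turn, its jump across $(t,+\infty)$, its singularity at $z=t$, and its behavior as $z\to\infty$. Two preliminary remarks: since $\Theta=\left(\begin{smallmatrix}1&0\\ v&1\end{smallmatrix}\right)\Psi$ with the left factor independent of $z$, and conjugation by that lower triangular factor leaves the $(1,2)$ entry unchanged, we have $n(z)=\bigl(\partial_z\Psi(z)\cdot\Psi(z)^{-1}\bigr)_{1,2}$; and $(\Theta(z))_{1,1}=(\Psi(z))_{1,1}=\sqrt{2\pi}\,\varphi(z;x,t)$. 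To compute the jump, differentiate $\Theta_+=\Theta_-J_\Psi$ (valid on $(t,+\infty)$) in $\lambda$: using $\partial_\lambda J_\Psi=\left(\begin{smallmatrix}0&-\sigma'(\lambda)\\ 0&0\end{smallmatrix}\right)$ and $\Theta_+^{-1}=J_\Psi^{-1}\Theta_-^{-1}$ one finds $\bigl(\partial_z\Theta\cdot\Theta^{-1}\bigr)_+-\bigl(\partial_z\Theta\cdot\Theta^{-1}\bigr)_-=\Theta_-\bigl(\partial_\lambda J_\Psi\bigr)J_\Psi^{-1}\Theta_-^{-1}=-\e^{-\mathrm{i}\pi\alpha}\sigma'(\lambda)\,\Theta_-\left(\begin{smallmatrix}0&1\\ 0&0\end{smallmatrix}\right)\Theta_-^{-1}$. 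Its $(1,2)$ entry, via $\det\Theta=1$, equals $-\e^{-\mathrm{i}\pi\alpha}\sigma'(\lambda)\bigl((\Theta_-)_{1,1}\bigr)^2$; inserting $(\Theta_-)_{1,1}=\sqrt{2\pi}\,\varphi_-=\sqrt{2\pi}\,\e^{\mathrm{i}\pi\alpha}\varphi_+$ from \eqref{eq:monodromyvarphi} and the relation $\mathrm{i}\,\e^{-\mathrm{i}\pi\alpha}c_\alpha^2=-1$ yields
\[
n_+(\lambda)-n_-(\lambda)=j(\lambda):=2\pi\mathrm{i}\,c_\alpha^2\,\sigma'(\lambda)\,\varphi_+(\lambda;x,t)^2,\qquad\lambda\in(t,+\infty).
\]

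Next, Proposition~\ref{prop:pawpaeta} together with the explicit $t$-equation \eqref{eq:diffeqt} gives $\partial_z\Theta\cdot\Theta^{-1}=-\tfrac{1}{z-t}V(x,t)+o\bigl((z-t)^{-1}\bigr)$ as $z\to t$, hence $n(z)=\tfrac{v_t}{z-t}+o\bigl((z-t)^{-1}\bigr)$ because $V_{1,2}=-v_t$ by \eqref{eq:S}. Since $\varphi_+(\lambda;x,t)^2=O\bigl((\lambda-t)^\alpha\bigr)$ as $\lambda\to t_+$ (from Proposition~\ref{proposition:behavioratzero}) with $\alpha>-1$, and $j$ decays at $+\infty$ (the decay of $\sigma'$ there being already implicit in \eqref{eq:tobedeformed}, and $\varphi_+$ being bounded), the Cauchy transform $\tfrac1{2\pi\mathrm{i}}\int_t^{+\infty}\tfrac{j(\lambda)}{\lambda-z}\,\d\lambda$ is well defined, is $o\bigl((z-t)^{-1}\bigr)$ near $z=t$, and is $O(z^{-1})$ at $\infty$. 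As $n(z)$ is itself $O(z^{-1})$ at $\infty$ by \eqref{eq:Psiasympinfty}, the difference $n(z)-\tfrac{v_t}{z-t}-\tfrac1{2\pi\mathrm{i}}\int_t^{+\infty}\tfrac{j(\lambda)}{\lambda-z}\,\d\lambda$ is analytic on $\mathbb C\setminus[t,+\infty)$, has trivial jump across $(t,+\infty)$ (Sokhotski--Plemelj), has a removable singularity at $z=t$, and vanishes at $\infty$; by Liouville it is identically zero, i.e.\
\[
n(z)=\frac{v_t}{z-t}+\frac1{2\pi\mathrm{i}}\int_t^{+\infty}\frac{j(\lambda)}{\lambda-z}\,\d\lambda .
\]

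It remains to expand both sides as $z\to\infty$. The right-hand side gives $n(z)=\bigl(v_t-\tfrac1{2\pi\mathrm{i}}\!\int_t^{+\infty}\! j\bigr)z^{-1}+\bigl(v_t t-\tfrac1{2\pi\mathrm{i}}\!\int_t^{+\infty}\!\lambda\, j\bigr)z^{-2}+O(z^{-3})$. For the left-hand side, differentiating \eqref{eq:Psiasympinfty} in $z$ (the sectorial constant factors cancel in $\partial_z\Psi\cdot\Psi^{-1}$) and using $\det\Psi=1$, $(\Psi_1)_{1,2}=-v$ and $2(\Psi_1)_{1,1}+v^2=v_x$ (the latter two from the proof of Proposition~\ref{prop:equations}), a direct computation gives $n(z)=\tfrac{x}{2}z^{-1}+\tfrac12(xv)_x\,z^{-2}+O(z^{-3})$. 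Comparing the coefficients of $z^{-1}$ yields $v_t-\tfrac x2=\tfrac1{2\pi\mathrm{i}}\int_t^{+\infty}j=c_\alpha^2\int_t^{+\infty}\varphi_+(\lambda;x,t)^2\,\sigma'(\lambda)\,\d\lambda$, which is \eqref{eq:dtlogF}. Comparing the coefficients of $z^{-2}$, writing $\int_t^{+\infty}\lambda\, j=\int_t^{+\infty}(\lambda-t)j+t\int_t^{+\infty}j$ and substituting the relation just obtained, yields $\tfrac12\bigl(xt-(xv)_x\bigr)=\tfrac1{2\pi\mathrm{i}}\int_t^{+\infty}(\lambda-t)j=c_\alpha^2\int_t^{+\infty}(\lambda-t)\varphi_+(\lambda;x,t)^2\,\sigma'(\lambda)\,\d\lambda$, which is \eqref{eq:dxlogF}.

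The main obstacle is the bookkeeping at $z=t$: one must know precisely that $j$ is integrable there (which reduces to $\varphi_+(\lambda;x,t)\sim c\,(\lambda-t)^{\alpha/2}$, read off from Proposition~\ref{proposition:behavioratzero}) and --- crucially --- that $\partial_z\Theta\cdot\Theta^{-1}$ has an \emph{at most simple} pole at $z=t$ with residue exactly $-V(x,t)$, which is precisely the content of Proposition~\ref{prop:pawpaeta}. A lesser, purely computational nuisance is carrying \eqref{eq:Psiasympinfty} to the order needed for the $z^{-2}$ coefficient; this can be avoided by instead differentiating \eqref{eq:tobedeformed} in $x$ and invoking the elementary consequence of the $x$-Lax equation \eqref{eq:diffeqx} that $\partial_x\bigl(\Psi_+(\lambda)^{-1}\partial_\lambda\Psi_+(\lambda)\bigr)_{2,1}=-(\Psi_+(\lambda))_{1,1}^2=-2\pi\varphi_+(\lambda;x,t)^2$, which produces \eqref{eq:dxlogF} directly and, together with the analogous identity $\partial_t\log F_\sigma=\tfrac{c_\alpha^2}{2\pi}\int_t^{+\infty}\bigl(\Psi_+(\lambda)^{-1}\partial_\lambda\Psi_+(\lambda)\bigr)_{2,1}\sigma'(\lambda)\,\d\lambda$ obtained from Jacobi's variational formula (exactly as in \eqref{eq:Jacobi}--\eqref{eq:tobedeformed}, with $\partial_t$ in place of $\partial_x$), also \eqref{eq:dtlogF}.
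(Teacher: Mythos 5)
Your proof is correct and follows essentially the same route as the paper's: both extract the two identities from the matrix $\partial_z\Theta\cdot\Theta^{-1}$ (you work with its $(1,2)$ entry $n(z)$ directly, the paper with the full matrix $A(z)$), using the additive jump across $(t,+\infty)$, the simple pole at $z=t$ with residue $-V$ supplied by Proposition~\ref{prop:pawpaeta}, and the expansion at $\infty$ from~\eqref{eq:Psiasympinfty}. The only difference is cosmetic: you package the argument as a single Liouville/Cauchy-transform decomposition and then read off the $z^{-1}$ and $z^{-2}$ coefficients, whereas the paper runs the equivalent formal-residue computation twice, once for $A(z)\,\mathrm{d}z$ and once for $(z-t)A(z)\,\mathrm{d}z$; the $z^{-2}$ coefficient $\tfrac12(xv)_x$ is needed in either case. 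Your remark at the end — obtaining~\eqref{eq:dxlogF} by differentiating~\eqref{eq:tobedeformed} in $x$, using $\partial_x\bigl(\Psi_+^{-1}\partial_\lambda\Psi_+\bigr)_{2,1}=-2\pi\varphi_+^2$, and obtaining~\eqref{eq:dtlogF} from the analogous $\partial_t\log F_\sigma$ formula combined with a further $\partial_x$ and~\eqref{eq:vF} — is a valid and arguably slicker variant that avoids carrying the expansion of~$\Psi$ to order $z^{-2}$.
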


We recall that the Stieltjes integrals with respect to $\mathrm d\sigma$ have been defined in the Introduction, cf.~\eqref{eq:dsigma}.

\begin{proof}
Let us denote $A(z)=\partial_z\Theta(z)\cdot\Theta(z)^{-1}$, which is an analytic function of $z\in\mathbb C\setminus[ t ,+\infty)$ with boundary values $A_\pm(\lambda)=\lim_{\epsilon\to 0_+}A(\lambda\pm\mathrm{i}\epsilon)$ continuous in $(t,+\infty)\setminus\lbrace r_1,\dots,r_k\rbrace$ and related by
\begin{align}
\nonumber
A_+(\lambda)-A_-(\lambda)&=\Theta_+(\lambda)\cdot J_\Psi(\lambda)^{-1}\cdot\partial_\lambda J_\Psi(\lambda)\cdot\Theta_+(\lambda)^{-1}
\\
&=-\e^{\mathrm{i}\pi\alpha}\,\Theta_+(\lambda)\cdot\begin{pmatrix}0 & \sigma'(\lambda) \\ 0 & 0
\end{pmatrix}\cdot\Theta_+(\lambda)^{-1},
\end{align}
for all $\lambda\in(t,+\infty)\setminus\lbrace r_1,\dots,r_k\rbrace$, as it follows from~\textit{(b)} in Riemann--Hilbert problem~\ref{RHP:Psi}.
Therefore, a contour deformation argument yields
\begin{align}
\nonumber
&\frac{c_\alpha^2}{2\pi}\int_{(t,+\infty)\setminus\lbrace r_1,\dots,r_k\rbrace}\Theta_+(\lambda)\cdot\begin{pmatrix}0 & \sigma'(\lambda) \\ 0 & 0
\end{pmatrix}\cdot\Theta_+(\lambda)^{-1}\mathrm d\lambda
\\
\nonumber
&\qquad =
\frac 1{2\pi\mathrm{i}}\int_{(t,+\infty)\setminus\lbrace r_1,\dots,r_k\rbrace}\bigl(A_+(\lambda)-A_-(\lambda)\bigr)\mathrm d\lambda
\\
&\qquad=\res{z=t}A(z)\mathrm{d} z+\res{z=\infty}A(z)\mathrm{d} z+\sum_{j=1}^k1_{r_j>t}\res{z=r_j}A(z)\mathrm d z.
\end{align}
Here, the residues are meant as \textit{formal residues}: namely, the formal residue at $ t , r_j$ is $1/(2\pi\mathrm{i})$ times the limit of the contour integral over a positively oriented circle centered at~$ t,r_j $ as the radius tends to zero, and the formal residue at $\infty$ is $1/(2\pi\mathrm{i})$ times the limit of the contour integral over a negatively oriented circle as the radius diverges to $+\infty$.
As long as the asymptotic expansion around these singularities is uniform away from $[ t ,+\infty)$ (which will be the case), such formal residues can be computed as if they were ordinary residues by extracting the appropriate coefficient in the asymptotic expansions of the integrand around the singularity: for the residue at~$ t $ this is the coefficient of $(z- t )^{-1}$, for that at $r_j$ the coefficient of $(z-r_j)^{-1}$, and for that at~$\infty$ the coefficient of $z^{-1}$ taken with opposite sign.

Take the $(1,2)$-entry of the last identity to get, using~\eqref{eq:varphi1},
\begin{align}
\nonumber
&c_\alpha^2\int_{(t,+\infty)\setminus\lbrace r_1,\dots,r_k\rbrace}\varphi_+(\lambda;x,t)^2\,\sigma'(\lambda)\,\mathrm d\lambda
\\
&\quad=\res{z= t }\bigl(A(z)\bigr)_{1,2}\,\mathrm{d} z+\res{z=\infty}\bigl(A(z)\bigr)_{1,2}\,\mathrm{d} z+\sum_{j=1}^k1_{r_j>t}\res{z=r_j}\bigl(A(z)\bigr)_{1,2}\,\mathrm{d} z.
\end{align}
The residue at $\infty$ is computed to be~$-x/2$, thanks to~\eqref{eq:Psiasympinfty}, and the one at~$t$ to be~$v_t$, thanks to Proposition~\ref{prop:pawpaeta} and~\eqref{eq:diffeqt}, and those at $r_j$ are computed as follows.
Recall Proposition~\ref{proposition:behavioratzero} and set $\widetilde E_j=\begin{pmatrix} 1 & 0 \\ v & 1 \end{pmatrix}E_j$, such that
\begin{align}
\nonumber
\res{z=r_j}\bigl(A(z)\bigr)_{1,2}\,\mathrm d z&=\res{z=r_j}\biggl(\widetilde E_j(z)\begin{pmatrix}
0 & \partial_z h_j(z) \\ 0 & 0
\end{pmatrix}\widetilde E_j(z)^{-1}\biggr)_{1,2}\,\mathrm d z
\\
\label{verylast}
&=\res{z=r_j}\partial_z h_j(z)\mathrm d z\,\bigl(\widetilde E_j(r_j)_{1,1}\bigr)^2
\end{align}
where we recall that $\widetilde E_j(z)$ is analytic in $z$ for $z$ sufficiently close to $r_j$ and has unit determinant.
By the definition of $\widetilde E_j$, we have
\be
\label{veryverylast}
\bigl(\widetilde E_j(r_j)_{1,1}\bigr)^2=2\pi\varphi_+(r_j;x,t)^2\,\e^{\mathrm i\pi\alpha}.
\ee 
It remains to compute the formal residue $\res{z=r_j}\partial_z h_j(z)\mathrm d z$. To this end, introduce
\be
\widehat\sigma(\lambda)=\begin{cases}\sigma(\lambda), &\mbox{if }\lambda>r_j,
\\
\lim_{\epsilon\to 0_+}\sigma(r_j+\epsilon), &\mbox{if }\lambda=r_j,
\\
\sigma(\lambda)+m_j,&\mbox{if }\lambda<r_j,
\end{cases}
\ee
which is smooth in a neighborhood of $\lambda=r_j$, where we recall that $m_j$ is defined just after~\eqref{eq:dsigma}.
Therefore
\begin{align}
\nonumber
h_j(z)&=\frac 1{2\pi\mathrm i}\int_{r_j-\rho}^{r_j+\rho}\frac{1-\widehat{\sigma}(\lambda)}{\lambda-z}\mathrm d\lambda+\frac{m_j}{2\pi\mathrm i}\int_{r_j-\rho}^{r_j}\frac{\mathrm d\lambda}{\lambda-z}
\\
& =\frac 1{2\pi\mathrm i}\int_{r_j-\rho}^{r_j+\rho}\frac{1-\widehat{\sigma}(\lambda)}{\lambda-z}\mathrm d\lambda+\frac{m_j}{2\pi\mathrm i}\log(z-r_j)-\frac{m_j}{2\pi\mathrm i}\log(z-r_j+\rho)
\end{align}
where we use the principal branch of the logarithm.
Therefore,
\be
\partial_z h_j(z) = \frac 1{2\pi\mathrm i}\int_{r_j-\rho}^{r_j+\rho}\frac{1-\widehat{\sigma}(\lambda)}{(\lambda-z)^2}\mathrm d\lambda+\frac{m_j}{2\pi\mathrm i(z-r_j)}-\frac{m_j}{2\pi\mathrm i(z-r_j+\rho)}
\ee
and the first term in the right-hand side is bounded as $z\to r_j$ (cf.~\cite[\S~21]{Musk}) such that
\be
\res{z=r_j}\partial_z h_j(z)\mathrm d z=\frac{m_j}{2\pi\mathrm i}
\ee
and, by~\eqref{verylast} and~\eqref{veryverylast}, we finally get
\be
\res{z=r_j}\bigl(A(z)\bigr)_{1,2}\,\mathrm d z=-\mathrm i \e^{\mathrm i\pi\alpha}\varphi_+(r_j;x,t)=-m_jc_\alpha^2\varphi_+(r_j;x,t).
\ee
Combining these results and recalling the definition of Stieltjes integral, cf.~\eqref{eq:dsigma}, the proof of~\eqref{eq:dtlogF} is complete.
 
The proof of~\eqref{eq:dxlogF} is similar and so we only sketch it here.
We have
\begin{align}
\nonumber
&
\frac{c_\alpha^2}{2\pi}\int_{(t,+\infty)\setminus\lbrace r_1,\dots,r_k\rbrace}(\lambda- t )\,\Theta_+(\lambda)\cdot\begin{pmatrix}0 & \sigma'(\lambda) \\ 0 & 0
\end{pmatrix}\cdot\Theta_+(\lambda)^{-1}\mathrm d\lambda
\\
&\qquad=\res{z= t } (z- t )A(z)\mathrm{d} z+\res{z=\infty}(z- t )A(z)\mathrm{d} z+\sum_{j=1}^k1_{r_j>t}\res{z=r_j}(z- t )A(z)\mathrm{d} z.
\end{align}
By Proposition~\ref{prop:pawpaeta} and~\eqref{eq:diffeqt}, the residue at~$t$ vanishes.
Consider the $(1,2)$-entry and compute the residue at~$\infty$ by~\eqref{eq:Psiasympinfty} to get
\be
\res{z=\infty}(z- t )A(z)\mathrm{d} z=\frac {xt}2-x r-\frac 12v-\frac 12 xv^2=\frac 12\bigl(xt-(xv)_x\bigr)
\ee
where we use~\eqref{eq:rdef} and the relation $2r+v^2=v_x$ established just after~\eqref{eq:rdef}.
Finally, the residues at $r_j$ are computed similarly as in the previous case and combining these results yields~\eqref{eq:dxlogF}.
\end{proof}

\section{Small-\texorpdfstring{$x$}{x} asymptotics}

We first study small-$x$ asymptotics for $Y(z)$ off the real positive axis.

\begin{proposition}
\label{prop:small}
We have
\be
\bigl|\bigl(Y(z)\bigr)_{i,j}-\delta_{i,j}\bigr|=O\bigl(x^{2+2\alpha}\,(1+|z|)^{-1}\bigr),\qquad i,j=1,2,
\ee
as $x\to 0_+$ uniformly for $z\in\mathbb C$ such that $\mathrm{dist}(z,\mathbb R_+)>c(1+|z|)$, for any fixed $c>0$.
\end{proposition}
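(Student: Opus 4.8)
The plan is to run a small-norm Riemann--Hilbert argument starting from the closed formula~\eqref{eq:Yexplicit}. Writing $\mathbf F=(F_1,F_2)^\top$ with $F_i=(1-\mathcal K^\sigma_{x,t})^{-1}f_i$, that formula gives
\be
\bigl(Y(z)\bigr)_{i,j}-\delta_{i,j}=-\int_0^{+\infty}\frac{F_i(\lambda)\,g_j(\lambda)}{\lambda-z}\,\d\lambda,\qquad i,j=1,2,
\ee
so the statement reduces to: (a) a pointwise bound on $|\lambda-z|^{-1}$, uniform in $\lambda\geq 0$ over the region in question, that produces the factor $(1+|z|)^{-1}$; and (b) an $L^1(0,+\infty)$ bound of size $O(x^\gamma)$ on each product $F_ig_j$.

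For (a) I would first record the elementary geometric fact that there is a constant $c'=c'(c)>0$ such that $\mathrm{dist}(z,\mathbb R_+)>c(1+|z|)$ forces $|\lambda-z|\geq c'(1+|z|+\lambda)$ for every $\lambda\geq 0$; one proves this by splitting into the cases $\lambda\leq 2(1+|z|)$, where $|\lambda-z|\geq\mathrm{dist}(z,\mathbb R_+)>c(1+|z|)$, and $\lambda>2(1+|z|)$, where $|\lambda-z|\geq\lambda-|z|\geq\lambda/2$. In particular $|\lambda-z|^{-1}\leq\bigl(c'(1+|z|)\bigr)^{-1}$, which can be pulled out of the integral, leaving $\int_0^{+\infty}|F_i(\lambda)|\,|g_j(\lambda)|\,\d\lambda\leq\|F_i\|_{L^2(0,+\infty)}\|g_j\|_{L^2(0,+\infty)}$ by Cauchy--Schwarz. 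For the resolvent factor I would invoke the proof of Lemma~\ref{lemma:traceclass}, which shows $\|\mathcal K^\sigma_{x,t}\|_{\mathrm{op}}\leq\|\mathcal K^\sigma_{x,t}\|_1\leq\|\mathcal H\|_2^2=O(x^\gamma)\to 0$, so that for $x$ small enough $\|(1-\mathcal K^\sigma_{x,t})^{-1}\|_{\mathrm{op}}\leq 2$ and hence $\|F_i\|_{L^2(0,+\infty)}\leq 2\|f_i\|_{L^2(0,+\infty)}$.

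It then remains to bound $\|f_i\|_{L^2(0,+\infty)}^2=\tfrac12\int_0^{+\infty}\sigma(x^{-2}\lambda+t)|\widehat f_i(\lambda)|^2\d\lambda$ and the analogous $\|g_j\|_{L^2(0,+\infty)}^2$, using~\eqref{eq:fg}--\eqref{eq:fghat}. Standard Bessel asymptotics~\cite{DLMF} give that all the entries $\widehat f_i(\lambda),\widehat g_j(\lambda)$ are $O(\lambda^{\alpha/2})$ as $\lambda\to 0_+$ and $O(\lambda^{1/4})$ as $\lambda\to+\infty$; combining this with $\sigma(r)=O\bigl(\min(1,(r-t)^{-3/2-\kappa})\bigr)$ (from~\eqref{eq:decaysigma}) and splitting the $\lambda$-integral at $x^2$ and $1$ reproduces verbatim the estimate~\eqref{eq:last2}, so these norms are $O(x^\gamma)$ with $\gamma>0$ depending only on $\alpha,\kappa$ (a possible logarithmic factor being absorbed into a slightly smaller $\gamma$). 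Putting together these three ingredients yields $\bigl|\bigl(Y(z)\bigr)_{i,j}-\delta_{i,j}\bigr|=O\bigl(x^\gamma(1+|z|)^{-1}\bigr)$, uniformly over $\{\mathrm{dist}(z,\mathbb R_+)>c(1+|z|)\}$. I do not expect a genuine obstacle: once~\eqref{eq:Yexplicit} and Lemma~\ref{lemma:traceclass} are in hand the argument is routine, and the \emph{only} point needing care is the convergence at $+\infty$ of the last integrals --- the mild growth $O(\lambda^{1/4})$ of the two entries $\widehat f_2,\widehat g_1$ that carry an extra factor $\sqrt\lambda$ is tamed precisely by the decay exponent $3/2+\kappa$ in Assumption~\ref{assumption}, exactly as in the proof of Lemma~\ref{lemma:traceclass}.
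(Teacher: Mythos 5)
Your argument is correct and is essentially the same as the paper's: pull a $(1+|z|)^{-1}$ factor out of the Cauchy transform using the distance to $\mathbb R_+$, apply Cauchy--Schwarz, bound the resolvent's operator norm via the trace-class estimate from Lemma~\ref{lemma:traceclass}, and estimate $\|f_i\|_{L^2},\|g_j\|_{L^2}$ by splitting at $x^2$ and $1$ using Bessel asymptotics and the decay of $\sigma$. Two minor cosmetic points: the stronger geometric bound $|\lambda-z|\geq c'(1+|z|+\lambda)$ is not needed (since $|\lambda-z|\geq\mathrm{dist}(z,\mathbb R_+)>c(1+|z|)$ for $\lambda\geq 0$ already suffices once Cauchy--Schwarz absorbs the $\lambda$-integral), and the blanket claim that all entries are $O(\lambda^{1/4})$ at infinity is loose for $\widehat f_1,\widehat g_2$ (which decay like $\lambda^{-1/4}$), though using the worse exponent still yields a convergent tail integral and the same $\gamma$.
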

\begin{proof}
Recalling the explicit formula for $Y$ given in~\eqref{eq:Yexplicit}, we can estimate
\begin{align}
\nonumber
\bigl|\bigl(Y(z)\bigr)_{i,j}-\delta_{i,j}\bigr|&\leq \int_0^{+\infty}\biggl|\frac{\bigl((1-\mathcal K_{x,t}^\sigma)^{-1}f_i\bigr)(\lambda)\,g_j(\lambda)}{\lambda-z}\biggr|\,\mathrm d\lambda\\
&\leq \frac{\bigl\|(1-\mathcal K_{x,t}^\sigma)^{-1}f_i\bigr\|_{L^2(0,+\infty)}\,\bigl\|g_j\bigr\|_{L^2(0,+\infty)}}{c(1+|z|)}
\end{align}
for all $z\in\mathbb C$ such that $\mathrm{dist}(z,\mathbb R_+)>c(1+|z|)$, for any fixed $c>0$.
Denoting $\|\cdot\|_\infty$ the operator norm and $\|\cdot\|_1$ the trace-norm and recalling the estimates in the proof of Lemma~\ref{lemma:traceclass1} we have $\|\mathcal K_{x,t}^\sigma\|_\infty\leq\|\mathcal K_{x,t}^\sigma\|_1<1/2$ for $x$ sufficiently small hence $\|(1-\mathcal K_{x,t}^\sigma)^{-1}\|_\infty\leq (1-\|\mathcal K_{x,t}^\sigma\|_1)^{-1}=O(1)$ as~$x\to 0_+$. Therefore, it is enough to bound the $L^2(0,+\infty)$-norm of $f_i,g_j$, and so it suffices to bound the $L^2(0,+\infty)$-norm of $f_1,f_2$ (because $g_1=f_2$ and $g_2=-f_1$).
By standard asymptotic properties of Bessel functions, $\mathrm{J}_\alpha(\sqrt\lambda)=O(\lambda^{\frac \alpha 2})$ for $0<\lambda\leq 1$ and $\mathrm{J}_\alpha(\sqrt\lambda)=O(\lambda^{-1/4})$ for $\lambda\geq 1$, and, for any fixed $t$, $\sigma(r)=O\bigl(\min(1,(r-t)^{-b})\bigr)$ for all $r>t$ and for an arbitrarily large $b>0$ by Assumption~\ref{assumption}
Hence, as $x\to 0_+$,
\be
\label{eq:estimateL2f1}
\|f_1\|_{L^2(0,+\infty)}^2=O\biggl(\int_0^{x^2}\lambda^\alpha\mathrm d\lambda + \int_{x^2}^1\lambda^\alpha(x^{-2}\lambda)^{-b}\mathrm d\lambda
+\int_1^{+\infty}\lambda^{-\frac 12}(x^{-2}\lambda)^{-b}\mathrm d\lambda\biggr)=O(x^{2+2\alpha}).
\ee
Similarly, $\sqrt{\lambda}\mathrm{J}_\alpha'(\sqrt{\lambda})=O(\lambda^{\frac\alpha 2})$ for $0<\lambda\leq 1$ and $\sqrt{\lambda}\mathrm{J}_\alpha'(\sqrt{\lambda})=O(\lambda^{1/4})$ for $\lambda\geq 1$, hence we can prove in the same way that as $x\to 0_+$
\be
\label{eq:estimateL2f2}
\|f_2\|_{L^2(0,+\infty)}^2=O(x^{2+2\alpha})
\ee
and the proof is complete.
\end{proof}

\begin{proposition}\label{prop:smallx1}
We have
\be
\label{eq:integralcorollary}
Q_\sigma(x,t) = \exp\biggl(2c_\alpha^2\int_0^x\log\biggl(\frac x{x'}\biggr) \biggl(\int_t ^{+\infty}(\lambda- t)\,\varphi_+(\lambda;x',t)^2\,\mathrm d\sigma(\lambda)\biggr)\mathrm{d} x '\biggr) .
\ee
\end{proposition}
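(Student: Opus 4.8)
The plan is to recognise that, once combined with Proposition~\ref{prop:diffid}, equation~\eqref{eq:dxlogF} becomes a second-order linear ordinary differential equation in~$x$ for $\log F_\sigma$, which can then be integrated twice; the two integration constants are fixed by the behaviour of $\log F_\sigma$ and of $x\,\partial_x\log F_\sigma$ as $x\to 0_+$.

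First I would rewrite~\eqref{eq:dxlogF}. Since~\eqref{eq:vF} gives $xv=\tfrac{x^2t}{2}-\tfrac{4\alpha^2-1}{8}-x\,\partial_x\log F_\sigma$, one has $xt-(xv)_x=\partial_x\bigl(x\,\partial_x\log F_\sigma\bigr)$, so that~\eqref{eq:dxlogF} becomes
\be
\partial_x\bigl(x\,\partial_x\log F_\sigma(x,t)\bigr)=G(x,t),\qquad G(x,t):=2c_\alpha^2\int_t^{+\infty}(\lambda-t)\,\varphi_+(\lambda;x,t)^2\,\d\sigma(\lambda),
\ee
the integral on the right being convergent by Proposition~\ref{prop:nonlocal}. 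Next come the two boundary conditions at $x=0$. By Lemma~\ref{lemma:traceclass}, $\log F_\sigma(x,t)=O(x^\gamma)\to 0$ as $x\to 0_+$. Moreover, by~\eqref{eq:Ydet} we have $x\,\partial_x\log F_\sigma=(Y_1)_{1,2}$, and expanding $\tfrac1{\lambda-z}$ in~\eqref{eq:Yexplicit} identifies $Y_1=\int_0^{+\infty}\mathbf F(\lambda)\,\mathbf g(\lambda)^\top\,\d\lambda$; the Cauchy--Schwarz estimate $|(Y_1)_{i,j}|\leq\|F_i\|_{L^2(0,+\infty)}\|g_j\|_{L^2(0,+\infty)}$ together with the bounds $\|F_i\|_{L^2}\leq\|(1-\mathcal K_{x,t}^\sigma)^{-1}\|_\infty\|f_i\|_{L^2}=O(x^{\gamma/2})$ from the proof of Proposition~\ref{prop:small} (and the analogous bound $\|g_j\|_{L^2}=O(x^{\gamma/2})$) then give $x\,\partial_x\log F_\sigma=O(x^\gamma)\to 0$ as $x\to 0_+$.

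Integrating the displayed ODE once, using $x\,\partial_x\log F_\sigma\to 0$, yields $x\,\partial_x\log F_\sigma(x,t)=\int_0^xG(x',t)\,\d x'$; integrating once more, using $\log F_\sigma\to 0$, yields $\log F_\sigma(x,t)=\int_0^x\frac1{x''}\bigl(\int_0^{x''}G(x',t)\,\d x'\bigr)\d x''$. Changing the order of integration over the triangle $0<x'<x''<x$ and using $\int_{x'}^x\frac{\d x''}{x''}=\log(x/x')$ turns this into $\log F_\sigma(x,t)=\int_0^x\log(x/x')\,G(x',t)\,\d x'$; inserting the definition of $G$ and exponentiating gives~\eqref{eq:integralcorollary}.

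The delicate point, which I expect to be the main obstacle, is justifying the exchange of the order of integration, that is, the absolute convergence of $\int_0^x|\log(x/x')|\,|G(x',t)|\,\d x'$. This can be settled by establishing a pointwise bound $G(x,t)=O(x^{-1+\delta})$ as $x\to 0_+$ for some $\delta>0$ (in fact $\delta=2\alpha+2$ should work), which follows from the small-$x$ behaviour of $\Psi$: by~\eqref{eq:Psi} we have $\Psi(z)=\left(\begin{smallmatrix}1&0\\-xt/2&1\end{smallmatrix}\right)x^{\frac12\sigma_3}\,Y\bigl(x^2(z-t)\bigr)\,\Phi\bigl(x^2(z-t)\bigr)$, so one combines Proposition~\ref{prop:small} with the explicit behaviour of the Bessel parametrix $\Phi$ near $0$ and near $\infty$ recorded in~\eqref{eq:PhiBe}--\eqref{eq:PhiBeasympinfty} and the decay of $\sigma$ from Assumption~\ref{assumption}. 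Alternatively, and avoiding any pointwise estimate on $\varphi_+$, one may integrate by parts directly: setting $\Gamma(x''):=\int_0^{x''}G(x',t)\,\d x'=x''\,\partial_{x''}\log F_\sigma=O(x''^\gamma)$, a short computation gives, for $0<\epsilon<x$,
\be
\int_\epsilon^x\frac{\Gamma(x'')}{x''}\,\d x''-\int_\epsilon^x\log(x/x'')\,G(x'',t)\,\d x''=\Gamma(\epsilon)\log(x/\epsilon)\xrightarrow[\ \epsilon\to 0_+\ ]{}0,
\ee
so the two improper integrals coincide, and both equal $\log F_\sigma(x,t)$ by the previous paragraph. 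Either way, exponentiation concludes the proof.
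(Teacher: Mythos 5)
Your proposal is correct and follows essentially the same route as the paper: both start from $\partial_x\bigl(x\,\partial_x\log F_\sigma\bigr)=2c_\alpha^2\int_t^{+\infty}(\lambda-t)\varphi_+(\lambda;x,t)^2\,\d\sigma(\lambda)$ via~\eqref{eq:vF} and~\eqref{eq:dxlogF}, integrate twice using $\log F_\sigma\to 0$ and $x\,\partial_x\log F_\sigma\to 0$ as $x\to 0_+$, and handle the iterated integral by an integration by parts in the regularized $\int_\varepsilon^x$ variable with the boundary term $\Gamma(\varepsilon)\log(x/\varepsilon)\to 0$. Your second (``integration by parts directly'') alternative is precisely the paper's argument; your Cauchy--Schwarz derivation of $x\,\partial_x\log F_\sigma=O(x^\gamma)$ from $(Y_1)_{1,2}=\int_0^{+\infty}\mathbf F(\lambda)\mathbf g(\lambda)^\top\,\d\lambda$ is interchangeable with the paper's appeal to~\eqref{eq:Ydet} and Proposition~\ref{prop:small}, which rest on the same estimates.
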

\begin{proof}
Let $p(x,t)=2c_\alpha^2\int_ t ^{+\infty}(\lambda- t )\,\varphi_+(\lambda;x,t)^2\,\mathrm d\sigma(\lambda)$.
By \eqref{eq:vpaxQ} and~\eqref{eq:dxlogF} we have $p(x,t)=\partial_x\bigl(x\,\partial_x\log Q_\sigma(x,t)\bigr)$.
Next, note that
\be
\label{eq:xlogFxtozero}
x\,\partial_x\log Q_\sigma=\lim_{z\to\infty}\bigl(z\,Y(z)\bigr)_{1,2}=O(x^{2+2\alpha})
\ee
as $x\to 0_+$, which follows from~\eqref{eq:Ydet} and Proposition~\ref{prop:small}.
Thanks to~\eqref{eq:xlogFxtozero} we get $x\,\partial_x\log Q_\sigma(x,t)= \int_0^xp(x',t)\mathrm{d} x'$.
Integrating again we have, for all $\varepsilon>0$,
\begin{align}\nonumber
\log \frac{Q_\sigma(x,t)}{Q_\sigma(\varepsilon,t)} &= \int_\varepsilon^x\int_0^{x'}p(x'',t)\mathrm{d} x''\frac{\mathrm{d} x'}{x'} \\&= \log\biggl(\frac x\varepsilon\biggr)\biggl(\bigl(y\,\partial_y\log Q_\sigma(y,t)\bigr)\big|_{y=\varepsilon}\biggr)+\int_\varepsilon^x\log\biggl(\frac{x}{x'}\biggr)p(x',t)\mathrm{d} x'
\end{align}
where in the last step we integrate by parts.
To get~\eqref{eq:integralcorollary} it finally suffices to take the limit as $\varepsilon\to 0_+$, using~\eqref{eq:xlogFxtozero} and the fact that $\log Q_\sigma(\varepsilon,t)\to 0$ as $\varepsilon\to 0_+$ (cf.~Corollary~\ref{corollary:traceclass}).
\end{proof}
\begin{proposition}\label{prop:smallx2}
As $x\to 0_+$, we have $\varphi_+(\lambda;x,t)\sim c_\alpha^{-1}\sqrt{\frac x2}\,\mathrm{J}_\alpha(x\sqrt{\lambda-t})$.
\end{proposition}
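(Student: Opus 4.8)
The plan is to reduce the statement — via the dressing~\eqref{eq:Psi}, the explicit solution~\eqref{eq:Yexplicit} of RH problem~\ref{RHP:Y}, and the Its--Izergin--Korepin--Slavnov description of $\mathbf F=(1-\mathcal K^\sigma_{x,t})^{-1}\f$ — to the statement that $F_1$ is asymptotically equal to $f_1$ near the origin, which then follows from the smallness of $\mathcal K^\sigma_{x,t}$ for small $x$. First I would observe that the first row of $\left(\begin{smallmatrix}1&0\\-\frac{xt}2&1\end{smallmatrix}\right)x^{\frac12\sigma_3}$ is $(x^{\frac12},0)$, so that~\eqref{eq:varphi1} and~\eqref{eq:Psi} give $\varphi_+(\lambda;x,t)=\tfrac{1}{\sqrt{2\pi}}\,x^{\frac12}\bigl(Y_+(w)\Phi_+(w)\bigr)_{1,1}$ with $w:=x^2(\lambda-t)\in(0,+\infty)$; by~\eqref{eq:identitiesfg} the first column of $\Phi_+(w)$ equals $\tfrac{\sqrt\pi}{c_\alpha}\widehat\f(w)$, whence $\varphi_+(\lambda;x,t)=\tfrac{c_\alpha^{-1}}{\sqrt2}\,x^{\frac12}\bigl(Y_+(w)\widehat\f(w)\bigr)_1$. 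Next I would invoke the identity $F_1(\lambda)=\bigl(Y_\pm(\lambda)\f(\lambda)\bigr)_1$ valid for $\lambda\in(0,+\infty)$, which follows from $\mathbf F=\f+\mathcal K^\sigma_{x,t}\mathbf F$, the explicit formula~\eqref{eq:Yexplicit}, and the continuity of the integrable kernel along the diagonal (which makes the boundary value unambiguous, cf.~Remark~\ref{rem:bdryvalues}). Since $\f(w)=\sqrt{\sigma(\lambda)/2}\,\widehat\f(w)$ because $x^{-2}w+t=\lambda$, for every $\lambda$ with $\sigma(\lambda)>0$ one obtains $\varphi_+(\lambda;x,t)=c_\alpha^{-1}x^{\frac12}\sigma(\lambda)^{-\frac12}\,F_1\bigl(x^2(\lambda-t)\bigr)$; as $f_1(w)=\sqrt{\sigma(\lambda)/2}\,\mathrm{J}_\alpha(x\sqrt{\lambda-t})$, the claim will follow once I show $F_1(w)=f_1(w)\bigl(1+o(1)\bigr)$ as $x\to0_+$.

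For this last point I would write $F_1(w)-f_1(w)=\bigl(\mathcal K^\sigma_{x,t}F_1\bigr)(w)$ and estimate, by Cauchy--Schwarz, $\bigl|\bigl(\mathcal K^\sigma_{x,t}F_1\bigr)(w)\bigr|\le \|K^\sigma_{x,t}(w,\cdot)\|_{L^2(0,+\infty)}\,\|F_1\|_{L^2(0,+\infty)}$. Using $0\le\sigma\le1$, the projection property $\int_0^{+\infty}K_{\mathsf{Be}}(w,\mu)^2\,\d\mu=K_{\mathsf{Be}}(w,w)$, and $K_{\mathsf{Be}}(w,w)=O(w^\alpha)$ as $w\to0_+$ (Lemma~\ref{lemma:traceclass}), the first factor is $\sqrt{\sigma(\lambda)}\,O(x^\alpha)$; by the estimates in the proof of Proposition~\ref{prop:small} — in particular~\eqref{eq:estimateL2f1} together with $\|(1-\mathcal K^\sigma_{x,t})^{-1}\|=O(1)$ — the second factor is $O(x^{\gamma/2})$ for some $\gamma>0$. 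Hence $\bigl(\mathcal K^\sigma_{x,t}F_1\bigr)(w)=\sqrt{\sigma(\lambda)}\,O(x^{\alpha+\gamma/2})$, which is $O(x^{\gamma/2})$ relative to $f_1(w)\asymp\sqrt{\sigma(\lambda)}\,x^\alpha$; plugging back,
\be
\varphi_+(\lambda;x,t)=c_\alpha^{-1}\sqrt{\tfrac x2}\,\mathrm{J}_\alpha\bigl(x\sqrt{\lambda-t}\bigr)\bigl(1+O(x^{\gamma/2})\bigr),\qquad x\to0_+,
\ee
which is the assertion. Values of $\lambda$ with $\sigma(\lambda)=0$ can be handled by continuity, or simply disregarded since they do not contribute to the integrals against $\d\sigma$ in which $\varphi_+$ is subsequently used.

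The only genuinely delicate point I anticipate is this last estimate at the rescaled argument $w=x^2(\lambda-t)$: a crude bound on $\bigl(\mathcal K^\sigma_{x,t}F_1\bigr)(w)$ would diverge when $\alpha<0$, and what makes the correction negligible is the interplay of the extra vanishing $K_{\mathsf{Be}}(w,w)=O(w^\alpha)$ with the smallness $\|F_1\|_{L^2}=O(x^{\gamma/2})$, both of which must be compared carefully against the $O(x^\alpha)$ size of $f_1(w)$. An advantage of routing the argument through the identity $F_1=(Y_\pm\f)_1$ rather than through the small-$z$ behavior of $Y$ itself is that the case distinction $\alpha\ge0$ versus $-1<\alpha<0$ never needs to be made; everything else is bookkeeping with the explicit formulas already established above.
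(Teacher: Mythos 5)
Your proof is correct and follows the same overall route as the paper's: unwind $\varphi_+$ via the explicit formula~\eqref{eq:Yexplicit} for $Y$, isolate the leading term $\mathrm{J}_\alpha(x\sqrt{\lambda-t})$, and show the correction is negligible by Cauchy--Schwarz combined with the small-$x$ operator estimates from the proof of Proposition~\ref{prop:small}. Where you genuinely deviate, to advantage, is in how Cauchy--Schwarz is applied to the correction term. The paper's proof brings $\mathrm{J}_\alpha(\sqrt{w})$ inside the second $L^2$-norm and must estimate $\bigl\|\sqrt{\sigma(x^{-2}\mu+t)}\,K_{\mathsf{Be}}(\mu,w)/\mathrm{J}_\alpha(\sqrt w)\bigr\|_{L^2(\d\mu)}$; this requires Lemma~\ref{lemma:Bessel}, a pointwise bound on the ratio $K_{\mathsf{Be}}(\mu,\xi^2)/\mathrm{J}_\alpha(\xi)$ uniform in small $\xi$, whose proof rests on a monotonicity property of $\mathrm{J}_\alpha(z)z^{-\alpha}$. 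You instead bound $\sigma\le1$ and invoke the projection identity $\int_0^{+\infty}K_{\mathsf{Be}}(w,\mu)^2\,\d\mu=K_{\mathsf{Be}}(w,w)$ from~\eqref{eq:Kisproj}, which gives $\|K^\sigma_{x,t}(w,\cdot)\|_{L^2}\le\sqrt{\sigma(\lambda)}\sqrt{K_{\mathsf{Be}}(w,w)}$ exactly; the ratio $\sqrt{K_{\mathsf{Be}}(w,w)}/\mathrm{J}_\alpha(\sqrt w)$ is then $O(1)$ as $w\to0_+$ by inspection. This cleanly eliminates the need for Lemma~\ref{lemma:Bessel} in this proof, which is a real simplification.

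One caveat: routing the identity through $F_1=\bigl(Y_\pm\f\bigr)_1$ forces a division by $\sqrt{\sigma(\lambda)}$, which degenerates exactly when $\sigma(\lambda)=0$, and your closing remark (``handle by continuity or disregard'') is a hand-wave: for fixed $\lambda$ the error estimate is in $x$, not in $\lambda$, so continuity in $\lambda$ does not transfer the asymptotics without a uniformity argument you have not supplied. The cleaner fix, which is essentially what the paper does in~\eqref{eq:last}, is to never divide: compute $\bigl(Y_+(w)\widehat\f(w)\bigr)_1$ directly from~\eqref{eq:Yexplicit} and~\eqref{eq:identitiesfg}, yielding $\mathrm{J}_\alpha(\sqrt w)$ plus an explicit integral involving $F_1(\mu)\sqrt{\sigma(x^{-2}\mu+t)}K_{\mathsf{Be}}(w,\mu)$; your projection-identity bound then applies verbatim to that integral for all $\lambda>t$, without any case split on $\sigma(\lambda)$. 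Also, your closing observation that your route avoids a case split $\alpha\ge0$ versus $-1<\alpha<0$ is not really a point of contrast: the paper's proof of this proposition does not case-split on $\alpha$ either (that distinction only appears in the statement and uniqueness discussion of RH problems~\ref{RHP:Y} and~\ref{RHP:Psi}).
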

\begin{proof}
We first rewrite $\varphi(z;x,t)$ as
\begin{align}
\nonumber
&\biggl(\sqrt{\frac{x}{2\pi}},0\biggr)Y\bigl(x^2(z-t)\bigr)\Phi\bigl(x^2(z-t)\bigr)\begin{pmatrix}
1 \\ 0 
\end{pmatrix}
\\
\nonumber
&=\biggl(\sqrt{\frac{x}{2\pi}},0\biggr)\biggl(I-\int_0^{+\infty}\frac{\mathbf F(\mu)\mathbf g(\mu)^\top}{\mu-x^2(z-t)}\mathrm d\mu\biggr)\Phi\bigl(x^2(z-t)\bigr)\begin{pmatrix}
1 \\ 0 
\end{pmatrix}
\\
\label{eq:last}
&=\biggl(\sqrt{\frac{x}{2\pi}},0\biggr)\biggl(\Phi\bigl(x^2(z-t)\bigr)\begin{pmatrix}
1 \\ 0 
\end{pmatrix}+\frac{\sqrt{2\pi}}{c_\alpha}\int_0^{+\infty}\mathbf F(\mu)\sqrt{\sigma(x^{-2}\mu+t)}K_{\mathsf{Be}}\bigl(\mu,x^2(z-t)\bigr)\mathrm d\mu\biggr).
\end{align}
In the first line we use~\eqref{eq:varphi1}, in the second one we use~\eqref{eq:Yexplicit}, and in the last one we use Proposition~\ref{prop:identitiesfg}.
Hence, for all $\lambda>t$, $\varphi_+(\lambda;x,t)$ is equal to
\be
\sqrt{\frac{x}{2}}c_\alpha^{-1}\biggl(\mathrm{J}_\alpha(x\sqrt{\lambda-t})
+2\sqrt\pi\int_0^{+\infty}\bigl((1-\mathcal K_{x,t}^\sigma)^{-1}f_1\bigr)(\mu)\sqrt{\sigma(x^{-2}\mu+t)}K_{\mathsf{Be}}\bigl(\mu,x^2(\lambda-t)\bigr)\mathrm d\mu\biggr).
\ee
Defining $\mathcal E$ by
\begin{align}
\nonumber
\mathcal E&=\frac{\varphi_+(\lambda;x,t)}{\sqrt{\frac{x}{2}}c_\alpha^{-1}\mathrm{J}_\alpha(x\sqrt{\lambda-t})}-1
\\&=2\sqrt{\pi}\int_0^{+\infty}\bigl((1-\mathcal K_{x,t}^\sigma)^{-1}f_1\bigr)(\mu)\sqrt{\sigma(x^{-2}\mu+t)}\frac{K_{\mathsf{Be}}\bigl(\mu,x^2(\lambda-t)\bigr)}{\mathrm{J}_\alpha(x\sqrt{\lambda-t})}\mathrm d\mu,
\end{align}
we need to show that $|\mathcal E|\to 0$ as $x\to 0_+$.
By Cauchy--Schwarz inequality, $|\mathcal E|$ is bounded above by
\be 
2\sqrt\pi\,\biggl\|\bigl((1-\mathcal K_{x,t}^\sigma)^{-1}f_1\bigr)(\mu)\biggr\|_{L^2((0,+\infty),\mathrm d\mu)}\,\cdot\,\biggl\|\sqrt{\sigma(x^{-2}\mu+t)}\frac{K_{\mathsf{Be}}\bigl(\mu,x^2(\lambda-t)\bigr)}{\mathrm{J}_\alpha(x\sqrt{\lambda-t})}\biggr\|_{L^2((0,+\infty),\mathrm d\mu)}
\ee
The first $L^2$-norm is $o(1)$ as $x\to 0_+$ by the operator norm estimate $\|(1-\mathcal K_{x,t}^\sigma)^{-1}\|_{\infty}=O(1)$ and the $L^2$-norm estimate~\eqref{eq:estimateL2f1}, like in the proof of Proposition~\ref{prop:small}.
For the second $L^2$-norm we need the estimate for the ratio of the Bessel kernel over a Bessel function proved in Lemma~\ref{lemma:Bessel} and which implies, along with the usual estimate $\sigma(r)=O\bigl(\min(1,(r-t)^{-b})\bigr)$ for all $r>t$, for any fixed $t$ and an arbitrarily large $b$,
\begin{align}
\nonumber
&\biggl\|\sqrt{\sigma(x^{-2}\mu+t)}\frac{K_{\mathsf{Be}}\bigl(\mu,x^2(\lambda-t)\bigr)}{\mathrm{J}_\alpha(x\sqrt{\lambda-t})}\biggr\|_{L^2((0,+\infty),\mathrm d\mu)}^2\\
&\quad=O\biggl(\int_0^{x^2}\mu^\alpha\mathrm d\mu + \int_{x^2}^1\mu^\alpha(x^{-2}\mu)^{-b}\mathrm d\mu
+\int_1^{+\infty}\mu^{-\frac 12}(x^{-2}\mu)^{-b}\mathrm d\mu\biggr)
=o(1)
\end{align}
as $x\to 0_+$ and the proof is complete.
\end{proof}

\section{Proof of Theorem~\ref{thm}}\label{sec:proof}

\begin{proof}[Proof of Theorem~\ref{thm}]
\textit{(i)} follows from Lemma~\ref{lemma:traceclass1} and Corollary~\ref{corollary:traceclass}.
\textit{(ii)} follows from Proposition~\ref{prop:equations}.
About~\textit{(iii)}, $f(x;\lambda,t)$ defined by
\be
\label{eq:deff}
f(x;\lambda,t)=\sqrt 2\,c_\alpha\,\varphi_+(\lambda;x,t)
\ee
satisfies the properties stated in the theorem:~\eqref{eq:Schrodingerstatement} follows from~\eqref{eq:Schrodinger} and Proposition~\ref{prop:smallx2},~\eqref{eq:integration} follows from Proposition~\ref{prop:smallx1}, \eqref{eq:eqtstatement} follows from~\eqref{eq:diffeqt} (by looking at the $1,1$-entry), and~\eqref{eq:nonlocal} follows from Proposition~\ref{prop:nonlocal}.
It is worth noting that $f(x;\lambda,t)$ is real-valued by~\eqref{eq:conjugate}.
\end{proof}

\begin{remark}\label{remark:asymp}
The asymptotic relations~\eqref{eq:asympfintegrals} follow from the definition~\eqref{eq:deff}.
Indeed, the first relation follows from the fact that $Y_+(\lambda)\sim I$ as $\lambda\to +\infty$ and asymptotic properties of Bessel functions, recalling the definition~\eqref{eq:Psi} of $\Psi$ and~\eqref{eq:varphi1}.
The second one follows from the first point in Proposition~\ref{proposition:behavioratzero}.
\end{remark}

\section{Proof of Theorem~\ref{thm2}}\label{sec:proof2}

\begin{proof}[Proof of Theorem~\ref{thm2}]
We have
\begin{align}
\nonumber
&-\partial_x\log Q_\sigma(x,t) =-2x^{-3}\biggl[\int_0^{+\infty}\lambda\,\sigma'(x^{-2}\lambda+t)\,K_{\mathsf{Be}}(\lambda,\lambda)\,\mathrm d\lambda 
\\
\label{eq:Jacobibis}
&\ +\sum_{n\geq 1}\int_0^{+\infty}\lambda\,\sigma'(x^{-2}\lambda+t)\biggl(\int_{(0,+\infty)^n}K_{\mathsf{Be}}(\lambda,\mu_1)\cdots K_{\mathsf{Be}}(\mu_n,\lambda)\prod_{i=1}^n\sigma(x^{-2}\mu_i+t)\mathrm d\mu_i\biggr)\,\mathrm d\lambda\biggr].
\end{align}
By an integration by parts and a change of variable, the first term in the right-hand side is
\be
2x^{-1}\int_t^{+\infty}\sigma(\lambda)\frac{\mathrm d}{\mathrm d\lambda}\biggr(x^2(\lambda-t)\,\Delta_{\mathsf{Be}}\bigl(x^2(\lambda-t)\bigr)\biggr)\,\mathrm d\lambda
\ee
where we denote  $\Delta_{\mathsf{Be}}(\lambda)=K_{\mathsf{Be}}(\lambda,\lambda)$.
A direct computation shows that $\partial_\xi\bigl(\xi\Delta_{\mathsf{Be}}(\xi)\bigr)=\frac 14\mathrm{J}_\alpha(\sqrt\xi)^2$, implying that 
\be
2x^{-1}\frac{\mathrm d}{\mathrm d\lambda}\biggr(x^2(\lambda-t)\,\Delta_{\mathsf{Be}}\bigl(x^2(\lambda-t)\bigr)\biggr)=\frac x2\mathrm{J}_\alpha(x\sqrt{\lambda-t})^2.
\ee
Since $\mathrm{J}_\alpha(x\sqrt{\lambda-t})\sim x^{\alpha}\frac{2^{-\alpha}}{\Gamma(\alpha+1)}(\lambda-t)^{\alpha/2}$ as $x\to 0_+$, we have
\be
\frac x2\int_t^{+\infty}\sigma(\lambda)\mathrm{J}_\alpha(x\sqrt{\lambda-t})^2\mathrm d\lambda
\sim x^{2\alpha+1}\frac{2^{-2\alpha-1}}{\Gamma(\alpha+1)^2}
\int_t^{+\infty}\sigma(\lambda)(\lambda-t)^\alpha\mathrm d\lambda,\quad x\to 0_+,
\ee
as one can prove, for example, by Lebesgue dominated convergence theorem, using the fact that~$\sigma$ decays rapidly at~$+\infty$.
It remains to argue that the remaining terms in the right-hand side of~\eqref{eq:Jacobibis} give a negligible contribution with respect to the first one.
To this end, note that by~\eqref{eq:conv} and Cauchy inequality, we have $|K_{\mathsf{Be}}(\lambda,\mu)|^2\leq \Delta_{\mathsf{Be}}(\lambda)\Delta_{\mathsf{Be}}(\mu)$ such that each of these terms is bounded, in absolute value, by $PQ^n$ where
\be
P=2x^{-3}\int_{0}^{+\infty}\lambda\,\sigma'(x^{-2}\lambda+t)\,\Delta_{\mathsf{Be}}(\lambda)\,\mathrm d\lambda,\quad Q=\int_0^{+\infty}\sigma(x^{-2}\mu+t)\,\Delta_{\mathsf{Be}}(\mu)\,\mathrm d\mu,
\ee
and so the sum of these terms gives $PQ/(1-Q)$.
We just proved that, when $x\to 0_+$, $P=O(x^{2\alpha+1})$, while $Q=O(x^{2\alpha+2})$ follows from the arguments in the proof of Lemma~\ref{lemma:traceclass1}, cf.~\eqref{eq:finitetrace} and~\eqref{eq:last2}, and so~\eqref{eq:boundaryvalue} is proved.
\end{proof}

\section*{Acknowledgements}

I am grateful to Mattia Cafasso, Tom Claeys, Gabriel Glesner, Guilherme Silva, and Sofia Tarricone for valuable discussions.
I extend my gratitude to the anonymous referees for valuable comments and suggestions which improved the presentation.
This work is supported by the FCT grant 2022.07810.CEECIND and by the Grupo de F\'isica Matem\'atica (UIDB/00208/2020, UIDP/00208/2020, DOI: 10.54499/UIDB/00208/2020, 10.54499/UIDP/00208/2020).

%\section*{Declaration}
%
%The author has no conflicts of interest to declare that are relevant to the content of this article.

\end{document}